\documentclass{article}

\usepackage[T1]{fontenc}
\usepackage{authblk}
\usepackage{cite}
\usepackage{hyperref}

\usepackage{amsmath,amsthm}
\usepackage{amssymb}
\usepackage{physics}

\usepackage{fancyhdr}
\usepackage[top=1.5cm, bottom=1.5cm, left=1.5cm, right=1.5cm]{geometry}
\usepackage{xcolor}
\usepackage{graphicx}
\usepackage{tikz}
\usetikzlibrary{shapes, arrows, positioning}

\newtheorem{definition}{Definition}
\newtheorem{proposition}[definition]{Proposition}
\newtheorem{lemma}[definition]{Lemma}

\newtheorem{theorem}[definition]{Theorem}
\newtheorem{corollary}[definition]{Corollary}
\newtheorem{conjecture}[definition]{Conjecture}

\newtheorem{example}[definition]{Example}

\newcommand{\nc}{\newcommand}

\definecolor{strongred}{RGB}{255, 0, 0}
\definecolor{stronggreen}{RGB}{0, 155, 0}
\definecolor{strongblue}{RGB}{0, 0, 255}
\nc{\red}[1]{\textcolor{strongred}{#1}}
\nc{\green}[1]{\textcolor{stronggreen}{#1}}
\nc{\blue}[1]{\textcolor{strongblue}{#1}}

\nc{\bbC}{\mathbb{C}}
\nc{\bbH}{\mathbb{H}}
\nc{\bbM}{\mathbb{M}}
\nc{\bbR}{\mathbb{R}}
\nc{\tbc}{\blue{\textbf{to be continued ...}}}

\begin{document}

\title{Construction of three-qubit positive-partial-transpose entangled states of rank four}
\author{Yonggang Cheng, Lin Chen\footnote{corresponding author: linchen@buaa.edu.cn}}
\affil{LMIB (\href{https://ror.org/00wk2mp56}{Beihang University}), Ministry of Education, and School of Mathematical Sciences, Beihang University, Beijing 100191, China}
\date{\today}

\maketitle

\begin{abstract}
\noindent Multiqubit positive-partial-transpose (PPT) entangled states play an important role in quantum information theory. We characterize such states of minimum rank in three-qubit system, namely rank four. Depending on whether the Lorentz invariant is zero, we classify such states into two types. The PPT entangled states constructed by unextendible product bases (UPB) have nonzero invariants, which belong to type I. We provide a method to effectively determine whether a state can be constructed from UPB. For states with zero invariant, which belong to type II, we provide an explicit expression up to equivalence of stochastic local operations and classical communications (SLOCC). It turns out that we can represent them with only one complex parameter. We further study SLOCC-equivalence relation within the expression. We also investigate the Lorentz invariants of multiqubit states with rank less than three and analyze their range.

\vspace{0.5cm}

\noindent PACS numbers: 03.67.Mn, 03.65.Ud
\end{abstract}


\vspace{1cm}

\section{Introduction}

One of the main problems in quantum information theory is to determine the separability of a quantum state. It can be challenging to detect a multipartite entangled state, which is a valuable resource for quantum technologies \cite{length}. Peres \cite{Peres.A} proposes that a necessary condition for the separability is positive partial transpose (PPT), which is proven to be also a sufficient condition for $2\times2$ and $2\times3$ systems \cite{Horodecki.MPR}. Ref \cite{construction} provides a concise mathematical criterion to determine whether a state is PPT. However, there exist PPT states that are not separable in more general systems, namely PPT entangled states (PPTES). The first examples of PPTES in $3\times3$ and $2\times4$ systems are provided in \cite{Horodecki.P}. Moreover, Ref \cite{multiqubit} provides two methods for generating multiqubit PPTES and Ref \cite{multiqutrit} proves the existence of PPTES in three or more qutrits by explicit construction. PPT entangled states are relatively rare, some of which can serve as powerful resource for quantum measurement \cite{powerful}. In three-qubit system, any PPT state of rank at most three is known to be separable \cite{low_rank}, i.e., rank four is the minimal rank for the existence of a three-qubit PPTES. A notable property of three-qubit rank-four PPTES is that they are separable as bipartite states \cite{biseparable} and the decomposition into pure bipartite product states is unique, which turns out that all three-qubit rank-four PPTESs are partially entangled \cite{partial_entanglement}. Besides, each partial transpose of such states still has rank four and their ranges have no tripartite product vector \cite{444,Chen_2013,prod}. Thus, rank-four PPTESs in three-qubit system must be extreme points of the convex set of all PPT states. As far as we know, experiments in three-qubit system continue to make considerable progress \cite{experiment1,experiment2}. But in theory, a complete way for constructing all three-qubit rank-four PPTES is unknown yet.

On the other hand, Ref \cite{Bennett} introduces the concept of unextendible product bases (UPB), which is used to systematically construct PPTES. For example, a complete construction of $3\times3$ UPB is achieved \cite{3x3upb} and this tool helps construct all rank-four PPTES in $3\times3$ system \cite{3x3classify,Lin3x3}. Ref \cite{concurrence,max} further investigate UPB of maximal size and concurrence of PPTESs constructed by UPB. Finding UPB in multipartite systems is not straightforward, one can learn more about the structure of $n$-qubit UPB in \cite{Johnston}. In three-qubit system, UPB is fully characterized. However, there exist rank-four PPTESs that are not constructed by UPB \cite{general_position}. We know that a quantum state remains PPT and entangled under stochastic local operations and classical communication (SLOCC). Three-qubit entangled states are classified into partially entangled states, W class and GHZ class under SLOCC-equivalence \cite{WandGHZ}. Further classification and investigation are conducted on them by \cite{SLOCC,GHZ}. There is a reasonable classification of these PPTESs by a SL-equivalence invariant called the Lorentz invariant \cite{Lorentz_invariant}, where SL-equivalence is a special kind of SLOCC-equivalence. This turns out to be a key tool for the classification of three-qubit PPTES of rank four in this paper. For three-qubit pure states, a series of Lorentz invariants is further investigated in \cite{pure}.


A three-qubit state can be regarded as a quantum state in $2\times4$ system. General position in Theorem \ref{th:GP} is an important concept, by which we can obtain that a three-qubit rank-four PPTES has the unique decomposition as a biseparable state. In Lemma \ref{le:four vectors}, we simplify four pairwise linearly independent vectors in $\bbC^2$ by means of invertible transformations and scalar multiplications. Proposition \ref{pro:summary of properties} summarizes some properties of three-qubit rank-four PPTESs. It turns out that such states have the form as \eqref{eq:bipartite separable form}, thus, any three-qubit rank-four PPTES is not genuinely entangled. For convenience, we give a necessary and sufficient condition for the SLOCC-equivalence of two such states in Lemma \ref{le:SLOCC-equivalent criterion}. We present some examples of three-qubit rank-four PPTESs in Sec. \ref{sec:example}. Then we introduce the Lorentz invariant and the classification in Definition \ref{de:Lorentz invariant} and \ref{de:type I and type II}. In Lemma \ref{le:invariant of upb}, we calculate the Lorentz invariant of PPTES constructed by UPB. It turns out that all such states belong to \textbf{type I}. In Theorem \ref{th:characterization of UPB-states} We provide a method to determine whether a three-qubit rank-four PPTES can be constructed by UPB up to SLOCC-equivalence. Then we prove that all states in \textbf{type II} are SLOCC-equivalent to the form constructed by us in Theorem \ref{th:PPTES of type II}. In three-qubit system, PPTES with zero Lorentz invariant must be rank-four by Lemma \ref{le:zero invariant}. We also analyze the range of the Lorentz invariant in terms of $n$-qubit states in Lemma \ref{le:pure state's invariant} and \ref{le:rank-two}.

Suppose $\rho$ is a three-qubit rank-four PPTES, we can classify it through a concrete procedure. We first calculate the Lorentz invariant $I_\rho$ and classify it as one of the two types, namely \textbf{type I} and \textbf{type II}. If $I_\rho=0$, then $\rho$ can be expressed as the form \eqref{eq:PPTES by Bell states}. If $I_\rho\neq0$, we consider $\ker\rho$. By Appendix E in \cite{Lorentz_invariant}, we can find bipartite product vectors in the four-dimensional subspace $\ker\rho$, which has exactly four product vectors up to scalar multiplication by (iv) of Proposition \ref{pro:summary of properties}. Then we determine whether they are tripartite product vectors in general position. If they are, let \eqref{eq:four product vectors} be the four product vectors in any order and calculate \eqref{eq:t_k}. If $t_1,t_2,t_3$ satisfy the conditions in Theorem \ref{th:characterization of UPB-states}, then $\rho$ is constructed by UPB. In other cases, $\rho$ cannot be constructed by UPB. See Figure \ref{fig:flowchart}.

\begin{figure}[htbp]
\centering
\begin{tikzpicture}[node distance=3cm, auto,
    decision/.style={diamond, draw, fill=blue!10, text width=7em, text centered, inner sep=0.5pt, font=\normalsize},
    block/.style={rectangle, draw, fill=blue!05, text width=7.7em, text centered, rounded corners, inner sep=2pt, font=\normalsize},
    arrow/.style={thick, ->, >=stealth}]

\node[block] (PPTES) {Three-qubit rank-four PPTES $\rho$};
\node[decision, below of=PPTES, node distance=2.5cm] (invariant) {Compute $I_\rho$};
\node[block, below of=invariant] (zero) {$\rho$ belong to \textbf{type II}};
\node[block, below of=zero] (well) {$\rho$ is SLOCC-equivalent to \eqref{eq:PPTES by Bell states} by Theorem \ref{th:PPTES of type II}};
\node[block, right of=invariant, node distance=5cm] (nonzero) {$\rho$ belong to \textbf{type I}};
\node[block, right of=nonzero, node distance=3.7cm] (kernel) {Consider $\ker\rho$ which has four bipartite product vectors by Proposition \ref{pro:summary of properties}};
\node[decision, below of=kernel, node distance=3.5cm] (prod) {Are they tripartite product vectors in general position?};
\node[block, below of=prod, node distance=3.5cm] (calc) {Compute $t_1,t_2,t_3$};
\node[decision, below of=calc, node distance=2.5cm] (cond) {Satisfy conditions in Theorem \ref{th:characterization of UPB-states}?};
\node[block, below of=cond, node distance=3.5cm] (UPB) {Can be constructed by UPB};
\node[block, right of=cond, node distance=5cm] (notUPB) {Not constructible by UPB};

\draw[arrow] (PPTES)--(invariant);
\draw[arrow] (invariant)--node[left] {$I_\rho=0$} (zero);
\draw[arrow] (zero)--(well);
\draw[arrow] (invariant)--node[above] {$I_\rho\neq0$} (nonzero);
\draw[arrow] (nonzero)--(kernel);
\draw[arrow] (kernel)--(prod);
\draw[arrow] (prod)--node[left] {Yes} (calc);
\draw[arrow] (prod)--node[above] {No} (notUPB);
\draw[arrow] (calc)--(cond);
\draw[arrow] (cond)--node[left] {Yes} (UPB);
\draw[arrow] (cond)--node[above] {No} (notUPB);
\end{tikzpicture}
\caption{Three-qubit rank-four PPTES classification procedure}
\label{fig:flowchart}
\end{figure}

\section{Preliminaries}
\label{sec:pre}

In this section, we introduce the preliminary knowledge used in this paper. They are split into four subsections, namely linear algebra in Sec. \ref{sec:linear algebra}, quantum information in Sec. \ref{sec:quant information}, example in Sec. \ref{sec:example} and Lorentz invariant in Sec. \ref{sec:Lorentz invariant}.

\subsection{Linear algebra}
\label{sec:linear algebra}

A three-qubit state can be also regarded as a quantum state in bipartite system, such as A-BC, B-AC and C-AB partitions. For the B-AC partition, we need the following definition.
\begin{definition}
\label{de:split tensor product}
Let
\begin{eqnarray}
P\in\bbM_2(\bbC),\quad Q=\begin{pmatrix}
    A & B \\
    C & D
\end{pmatrix}\in\bbM_4(\bbC),
\end{eqnarray}
where $A,B,C,D\in\bbM_2(\bbC)$. Let
\begin{eqnarray}
P\otimes_sQ:=\begin{pmatrix}
    P\otimes A & P\otimes B \\
    P\otimes C & P\otimes D
\end{pmatrix}.
\end{eqnarray}
Then $P\otimes_sQ=V\otimes P\otimes W$ if $Q=V\otimes W$, where $V,W\in\bbM_2(\bbC)$. We called $\otimes_s$ the split tensor product.
\qed
\end{definition}

General position is an important concept and here is a theorem about it.
\begin{theorem}
\label{th:GP}
A finite set
\begin{eqnarray}
\{z_i=x_{1i}\otimes x_{2i}\otimes\dots\otimes x_{ni}\in\bbC^{d_1}\otimes\bbC^{d_2}\otimes\dots\otimes\bbC^{d_n}:i=1,2,\dots,k\}
\end{eqnarray}
is said to be in general position if for each $j=1,2,\dots,n$ and a subset $J\subset\{1,2,\dots,k\}$ with $|J|\leq d_j$, the set $\{x_{ji}:i\in J\}$ is linearly independent in $\bbC^{d_j}$. By Theorem 3.2 in \cite{general_position}, if $k \leq \sum_{i=1}^n(d_i-1)$, the span of these product vectors has no more product vectors except for scalar multiples of these product vectors.
\end{theorem}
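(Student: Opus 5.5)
The substantive claim is: if $z_1,\dots,z_k$ are product vectors in general position with $k\le \sum_{j=1}^n(d_j-1)$, then the only product vectors in their span $S$ are scalar multiples of the $z_i$. I would argue by induction on $k$, peeling off one factor at a time with a dual vector.

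\textbf{Setup and base case.} Since $|J|\le d_j$ forces linear independence of the $x_{ji}$, the $z_i$ are themselves linearly independent; if some $x_{ji}$ were zero, general position would fail. For $k=1$ the claim is trivial.

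\textbf{Inductive step.} Let $w=y_1\otimes\dots\otimes y_n=\sum_i c_i z_i\in S$ be a nonzero product vector. It suffices to show that exactly one $c_i$ is nonzero. Suppose instead that the support $I=\{i:c_i\neq0\}$ satisfies $|I|\ge2$, and aim for a contradiction.

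Since $|I|\le k\le\sum_j(d_j-1)$, we can partition $I$ into disjoint sets $I_1,\dots,I_n$ with $|I_j|\le d_j-1$. For each $j$, choose a linear functional $f_j$ on $\bbC^{d_j}$ that vanishes on $\{x_{ji}:i\in I_j\}$. Contracting $w$ with $f_j$ in the $j$-th factor kills every term indexed by $I_j$. Applying this successively to one factor kills some terms while keeping the others. The aim is to reach a product vector (of fewer tensor factors) that is a combination of at most one surviving term, whose coefficient is forced to vanish, so that linear independence of the partial tensors gives a contradiction.

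\textbf{Cleaner alternative (the route I would actually follow).} Pick $i_0\in I$. Partition $I\setminus\{i_0\}$ into sets $I_j$ with $|I_j|\le d_j-1$. By general position, $\{x_{ji}:i\in I_j\}\cup\{x_{ji_0}\}$ has at most $d_j$ elements and is therefore independent. So there is $f_j$ vanishing on $x_{ji}$ for $i\in I_j$ with $f_j(x_{ji_0})=1$. Then
\[
(f_1\otimes\dots\otimes f_n)(w)=c_{i_0}\neq0,
\]
so $f_j(y_j)\neq0$ for every $j$.

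Now vary the choice. For a fixed $j$, we may also pick the functional to vanish on $x_{ji_0}$ and on all but one element of $I_j$, where possible. Using the freedom to choose the partition differently, this should show that $y_j$ is proportional to $x_{ji_0}$ for all $j$. Concretely: if $y_j$ is not parallel to $x_{ji_0}$, move one index $i_1\in I\setminus\{i_0\}$ into a slot where the count allows an extra constraint. Then choose the functionals so that all terms except $i_1$ are killed while $f(y)=0$, which forces $c_{i_1}=0$.

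\textbf{Main obstacle.} The hard part is the counting that guarantees a partition leaving exactly one surviving term while imposing the extra vanishing condition on $y$. This is precisely where the bound $k\le\sum(d_j-1)$ must be used sharply, since one slot is consumed by $y$ itself. I would handle it by a Hall-type counting over the factors. Failing that, I would simply cite Theorem 3.2 of \cite{general_position}, as the statement does.
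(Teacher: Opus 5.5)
The paper gives no argument of its own here; it quotes Theorem 3.2 of \cite{general_position}, so your fallback of citing that theorem is exactly the paper's route. The self-contained argument you sketch, however, has a genuine gap at the step ``this should show that $y_j$ is proportional to $x_{ji_0}$.'' To force $c_{i_1}=0$ you want, in some slot $j$, a functional that kills $y_j$ and the $x_{ji}$ ($i\in I_j$) but not $x_{ji_1}$. No such functional exists whenever $y_j\in\mathrm{span}(\{x_{ji_1}\}\cup\{x_{ji}:i\in I_j\})$ with nonzero $x_{ji_1}$-component, and the hypothesis $y_j\not\parallel x_{ji_0}$ says nothing about that situation. So the difficulty is not a Hall-type counting problem: what is missing is information about where $y_j$ lies. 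Two smaller points. First, you announce an induction on $k$ but never use an inductive hypothesis. Second, independence of the factor vectors does not by itself give independence of the $z_i$. Your functional trick, applied to a vanishing combination, does give it for $k\le\sum_j(d_j-1)+1$.

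The missing idea is to use the slack slot to \emph{locate} $y_j$ rather than to kill it. Put $m=|I|\ge 2$ and $S=\sum_j(d_j-1)$. Fix a party $j$ and indices $i_0\neq i_1$ in $I$, and take $A\subset I\setminus\{i_0,i_1\}$ with $|A|=s:=\max(0,m-S+d_j-2)$. Then $s\le d_j-2$, which is exactly where $m\le S$ is used. Also $s\le m-2$, which uses $d_j\le S$, i.e.\ that there is a second party.

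The remaining $m-1-s\le\sum_{j'\neq j}(d_{j'}-1)$ indices of $I\setminus(\{i_0\}\cup A)$ can be distributed over the other parties, with $f_{j'}$ chosen as in your first step, normalized by $f_{j'}(x_{j'i_0})=1$. Contracting $w$ then gives $\beta f_j(y_j)=c_{i_0}f_j(x_{ji_0})$ for \emph{every} $f_j$ vanishing on $\{x_{ji}:i\in A\}$. Here $\beta=\prod_{j'\neq j}f_{j'}(y_{j'})$ does not depend on $f_j$. Since $x_{ji_0}\notin\mathrm{span}\{x_{ji}:i\in A\}$, this forces $\beta\neq0$ and $y_j\in c\,x_{ji_0}+\mathrm{span}\{x_{ji}:i\in A\}$ with $c\neq0$.

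The same argument with $i_1$ and the same $A$ gives $y_j\in c'x_{ji_1}+\mathrm{span}\{x_{ji}:i\in A\}$ with $c'\neq0$. This contradicts general position, because the $s+2\le d_j$ vectors $x_{ji_0},x_{ji_1},x_{ji}$ ($i\in A$) are independent. Alternatively, once $y_j$ is located your own endgame goes through. Put $A\cup\{i_0\}$ into slot $j$ of the partition for $i_1$; then $f_j(y_j)=0$ automatically while $f_j(x_{ji_1})$ can be taken nonzero, forcing $c_{i_1}=0$. Either way $|I|=1$, which is the claim.
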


From the following lemma, we can see that four pairwise linearly independent two-dimensional vectors are determined by a single complex parameter up to scalar multiplications and permutations.
\begin{lemma}
\label{le:four vectors}
Let
\begin{eqnarray}
x=(x_1,x_2,x_3,x_4)=\begin{pmatrix}
    a & c & e & g \\
    b & d & f & h
\end{pmatrix}
\end{eqnarray}
with $\det(x_i,x_j)\neq0$ for $i\neq j$.

(i) There are invertible matrices $W\in\bbM_2(\bbC)$ and diagonal $D\in\bbM_4(\bbC)$, such that
\begin{eqnarray}
\label{eq:standard form}
W(x_1,x_2,x_3,x_4)D=\begin{pmatrix}
    1 & 0 & 1 & t \\
    0 & 1 & 1 & 1
\end{pmatrix},\ t=\frac{\det(x_1,x_3)\det(x_2,x_4)}{\det(x_1,x_4)\det(x_2,x_3)}\neq0,1.
\end{eqnarray}

(ii) Let $\sigma$ be a permutation of $\{1,2,3,4\}$ and by (i), there are $W_\sigma$ and $D_\sigma$ such that
\begin{eqnarray}
\label{eq:permutation and transformation}
W_\sigma(x_{\sigma(1)},x_{\sigma(2)},x_{\sigma(3)},x_{\sigma(4)})D_\sigma=\begin{pmatrix}
    1 & 0 & 1 & t_\sigma \\
    0 & 1 & 1 & 1
\end{pmatrix}.
\end{eqnarray}
Then the set of $t_\sigma$ for all $\sigma$ is
\begin{eqnarray}
\label{eq:set of t}
\{t,\frac{1}{t},1-t,\frac{1}{1-t},1-\frac{1}{t},\frac{t}{t-1}\}.
\end{eqnarray}

(iii) Let the sets
\begin{eqnarray}
\label{eq:A,B,C}
\mathcal{A}=(-\infty,0),\quad\mathcal{B}=(0,1),\quad\mathcal{C}=(1,\infty).
\end{eqnarray}
Then there is an invertible transformation $V$ such that $x_i$'s are transformed into two sets of orthogonal vectors if and only if $t\in\mathcal{A}\cup\mathcal{B}\cup\mathcal{C}$. In this case, we have
\begin{align}
\label{eq:t in A}
t\in\mathcal{A},\quad &then\ Vx_1\perp Vx_2,\ Vx_3\perp Vx_4; \\
t\in\mathcal{B},\quad &then\ Vx_1\perp Vx_4,\ Vx_2\perp Vx_3; \\
t\in\mathcal{C},\quad &then\ Vx_1\perp Vx_3,\ Vx_2\perp Vx_4.
\label{eq:t in C}
\end{align}
\end{lemma}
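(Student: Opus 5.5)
For (i), I would first choose $W_0=(x_1,x_2)^{-1}$, which is invertible because $\det(x_1,x_2)\neq0$. This sends $x_1,x_2$ to $e_1,e_2$, and by Cramer's rule
\[
W_0x_3=\frac{1}{\det(x_1,x_2)}\begin{pmatrix}\det(x_3,x_2)\\ \det(x_1,x_3)\end{pmatrix},
\]
and similarly for $x_4$. Both coordinates of $W_0x_3$ and of $W_0x_4$ are nonzero by pairwise independence. Next I would apply a diagonal $W_1=\mathrm{diag}(\alpha,\beta)$ that makes $W_1W_0x_3$ proportional to $(1,1)^T$. The columns are then rescaled by $D$: the first two scalars restore $e_1,e_2$, the third normalizes $x_3$ to $(1,1)^T$, and the fourth makes the second coordinate of $x_4$ equal to $1$. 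The resulting first coordinate $t$ is a ratio of the four determinants; simplifying with $\det(x_3,x_2)=-\det(x_2,x_3)$ gives the cross-ratio formula stated. Finally, $t\neq0$ since no determinant vanishes, and $t\neq1$ because $(t,1)^T$ would otherwise be parallel to $(1,1)^T$, contradicting $\det(x_3,x_4)\neq0$. I would also note that $t$ is invariant under $x_i\mapsto Wx_i$ and under $x_i\mapsto\lambda_ix_i$, since each $\det$ scales by $\det W$ and each $x_i$ appears once in the numerator and once in the denominator.

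For (ii), I would use the explicit formula for $t_\sigma$ as a cross-ratio of $x_{\sigma(1)},\dots,x_{\sigma(4)}$. The cross-ratio is invariant under the Klein four-group of double transpositions, e.g. $(12)(34)$, so only the $S_3$ quotient matters. I would compute it on the transpositions $(12)$ and $(23)$ directly from the standard form $e_1,e_2,(1,1),(t,1)$. Swapping $x_1,x_2$ inverts $t$. Swapping $x_2,x_3$ gives $1-t$, using the Plücker identity
\[
\det(x_1,x_3)\det(x_2,x_4)=\det(x_1,x_2)\det(x_3,x_4)+\det(x_1,x_4)\det(x_2,x_3).
\]
These two maps generate the six-element anharmonic group, whose orbit of $t$ is exactly the listed set.

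For (iii), I would use the standard form from (i) and the invariance of $t$. For the forward direction, suppose $V$ yields orthogonal pairs. For pairs like $\{1,2\},\{3,4\}$, I would apply a further unitary and diagonal unitary so that $Vx_1=e_1$, $Vx_2=e_2$ up to scalars. Then $Vx_3=(p,q)$ and $Vx_4\propto(-\bar q,\bar p)$, and computing the cross-ratio gives $t=-|p|^2/|q|^2<0$, so $t\in\mathcal{A}$. The other two pairings follow from the permutation rules in (ii) applied to this case: $\mathcal{A}$ is mapped to $\mathcal{B}$ via $t\mapsto t/(t-1)$ and to $\mathcal{C}$ via $1-t$ (or $1-1/t$), which I would match to the specific transpositions. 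For the converse, when $t\in\mathcal{A}$, I would use the standard form with $D$ rescaled and explicitly construct $V=\mathrm{diag}(1,\sqrt{-t})$-type maps, so that $e_1\perp e_2$ and $(1,s)\perp(t,1)$ after adjustment, i.e. choose the diagonal so that the inner product $\bar{a}t+\bar{b}\cdot 1$ vanishes. The $\mathcal{B}$ and $\mathcal{C}$ cases then follow by relabeling.

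The main obstacle is (iii): tracking exactly which transposition sends $\mathcal{A}$ to $\mathcal{B}$ versus $\mathcal{C}$, and arguing that for non-real $t$ no pairing works. The latter follows because every pairing would force $t$, or its image under the group, to lie in $\mathcal{A}$, and hence $t$ would be real.
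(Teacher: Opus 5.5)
Your proposal is correct and follows essentially the paper's route. In (i) you normalize by $(x_1,x_2)^{-1}$ (the paper uses the adjugate) plus diagonal scalings, and in (ii) you track the permutation action on the cross-ratio. In (iii) you use $\mathrm{diag}(1,\sqrt{-t})$ for $t<0$, relabel for $\mathcal{B},\mathcal{C}$, and show that orthogonal pairs $\{1,2\},\{3,4\}$ force the corresponding cross-ratio to be negative. Your unitary normalization to $e_1,e_2$ is a tidier version of the paper's inner-product computation, and it also pins down which pairing occurs for each interval, which the paper leaves to ``calculation''.

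One small slip: with $Vx_3=(p,q)^T$ and $Vx_4\propto(-\bar q,\bar p)^T$ the cross-ratio is $t=-|q|^2/|p|^2$, not $-|p|^2/|q|^2$. This does not affect the sign argument.
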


One can see the proof for the lemma above in Appendix \ref{app:proof for lemma 3}, and the following fact is known in \cite{Bell_states}.
\begin{lemma}
\label{le:permutations of Bell states}
Let $\phi_j$'s be the four Bell states
\begin{eqnarray}
(\phi_1,\phi_2,\phi_3,\phi_4)=\frac{1}{\sqrt{2}}\begin{pmatrix}
    1 & 1 & 0 & 0 \\
    0 & 0 & 1 & 1 \\
    0 & 0 & 1 & -1 \\
    1 & -1 & 0 & 0
\end{pmatrix}.
\end{eqnarray}
For any permutation $\sigma$ of $\{1,2,3,4\}$, there exists a unitary operator of the product form
\begin{eqnarray}
U=P\otimes Q,\quad P,Q\in U(2)
\end{eqnarray}
and phases $\beta_j\in[0,2\pi)$ such that
\begin{eqnarray}
U\ket{\phi_j}=e^{i\beta_j}\ket{\phi_{\sigma(j)}},\quad j=1,2,3,4.
\end{eqnarray}
\end{lemma}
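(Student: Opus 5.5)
Since the symmetric group $S_4$ is generated by transpositions, I will first reduce the claim to a generating set. Suppose the statement holds for permutations $\sigma$ and $\tau$, with unitaries $U_\sigma=P_\sigma\otimes Q_\sigma$ and $U_\tau=P_\tau\otimes Q_\tau$. Then
\[
U_\sigma U_\tau\ket{\phi_j}=e^{i\beta^\tau_j}U_\sigma\ket{\phi_{\tau(j)}}=e^{i(\beta^\tau_j+\beta^\sigma_{\tau(j)})}\ket{\phi_{\sigma\tau(j)}} .
\]
The operator $U_\sigma U_\tau=(P_\sigma P_\tau)\otimes(Q_\sigma Q_\tau)$ is again a product of single-qubit unitaries. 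The phases can be reduced modulo $2\pi$. So the set of permutations satisfying the claim is closed under composition, and it suffices to treat the transpositions $(12)$, $(23)$ and $(34)$.

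For each generator I will exhibit an explicit local unitary. I write the Bell states as $\phi_1=(\ket{00}+\ket{11})/\sqrt2$, $\phi_2=(\ket{00}-\ket{11})/\sqrt2$, $\phi_3=(\ket{01}+\ket{10})/\sqrt2$ and $\phi_4=(\ket{01}-\ket{10})/\sqrt2$, with Pauli matrices $X$ and $Z$ and Hadamard $H$.

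\emph{Transposition $(12)$.} The operator $Z\otimes I$ sends $\phi_1\mapsto\phi_2$ and $\phi_2\mapsto\phi_1$. It sends $\phi_3\mapsto(-\ket{01}+\ket{10})/\sqrt2=-\phi_4$, which is wrong. Instead I use $S\otimes S$ with $S=\mathrm{diag}(1,i)$. This multiplies $\ket{00},\ket{01},\ket{10},\ket{11}$ by $1,i,i,-1$. Hence $\phi_1\leftrightarrow\phi_2$, while $\phi_3\mapsto i\phi_3$ and $\phi_4\mapsto i\phi_4$.

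\emph{Transposition $(23)$.} I use $H\otimes H$. This maps $\phi_1\mapsto\phi_1$, $\phi_2\mapsto\phi_3$, $\phi_3\mapsto\phi_2$, and $\phi_4\mapsto-\phi_4$ (the singlet is invariant up to $\det$).

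\emph{Transposition $(34)$.} Conjugating the $(12)$ case by $I\otimes X$ is the natural guess, but I will compute directly instead. The operator $S\otimes S^\dagger$ multiplies $\ket{00},\ket{01},\ket{10},\ket{11}$ by $1,-i,i,1$. It therefore fixes $\phi_1$ and $\phi_2$, sends $\phi_3\mapsto -i\phi_4$, and sends $\phi_4\mapsto -i\phi_3$.

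The main obstacle is simply verifying these small matrix computations and choosing gates that do not introduce unwanted sign flips, as happened with $Z\otimes I$. If a guess fails, the fallback is the standard fact that local Cliffords act on the Bell basis, up to phases, as the full permutation group. That fact can be checked through the correspondence $\phi_j\leftrightarrow(I\otimes\sigma_j)\phi_1$, where local unitaries $P\otimes \bar P$ act by conjugation on the Paulis.

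Combining the three generators through the closure argument yields the lemma for every $\sigma$, with phases $\beta_j\in[0,2\pi)$ obtained by reducing the accumulated phases modulo $2\pi$.
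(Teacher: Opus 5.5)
Your proof is correct, and it takes a different route from the paper. The paper gives no argument of its own and cites the fact from Kraus--Cirac \cite{Bell_states}. I checked your three generators:
\begin{itemize}
\item $S\otimes S$ realizes $(12)$ with phases $e^{i\beta_j}=(1,1,i,i)$;
\item $H\otimes H$ realizes $(23)$ with phases $(1,1,1,-1)$;
\item $S\otimes S^\dagger$ realizes $(34)$ with phases $(1,1,-i,-i)$.
\end{itemize}
Your closure-under-composition argument is right, and the adjacent transpositions generate $S_4$, so this proves the lemma.

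The fact behind the citation is usually derived from the magic-basis picture. In that basis, local unitaries $P\otimes Q$ act, up to a global phase, as the real rotation group $SO(4)$. Any permutation matrix becomes an element of $SO(4)$ after flipping one sign when it is odd, and this is why the statement needs the phases $e^{i\beta_j}$. That route is conceptual and short but relies on the $SU(2)\otimes SU(2)\to SO(4)$ correspondence. Your route is elementary and self-contained. It gives an explicit $U$ for every $\sigma$ as a word in three Clifford gates, and it shows that all $\beta_j$ can be taken in $\{0,\pi/2,\pi,3\pi/2\}$.

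One harmless slip: $(Z\otimes I)\phi_3=+\phi_4$, not $-\phi_4$. Also, the defect of $Z\otimes I$ is not a sign flip, since signs are absorbed by the phases. The defect is that it also swaps $\phi_3\leftrightarrow\phi_4$, so it realizes $(12)(34)$. Your fallback to the Pauli/Clifford correspondence is therefore unnecessary.
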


\subsection{Quantum information}
\label{sec:quant information}

\begin{definition}
A quantum state $\rho$ in the Hilbert space $\mathcal{H}=\bbC^{d_1}\otimes\bbC^{d_2}\otimes\dots\otimes\bbC^{d_n}$ is said to be separable if it is the convex combination 
\begin{eqnarray}
\label{eq:separable state}    
\rho=\sum_{i \in I}p_i\ket{z_i}\bra{z_i}
\end{eqnarray}
of pure product states, where
$$\ket{z_i}=\ket{x_{1i}}\otimes\ket{x_{2i}}\otimes\dots\otimes\ket{x_{ni}}\in\mathcal{H},\quad i\in I.$$
One can verify that if $\{\ket{z_i}:i\in I\}$ is linearly independent and its span has no more product vectors up to scalar multiplication, then $\rho$ has a unique decomposition as \eqref{eq:separable state}.
\qed
\end{definition}
The above condition is \textbf{not necessary}, e.g., if a rank-five state $\rho$ is the convex sum of five product states from the two-qutrit UPB \cite{3x3upb}, then we can show that $\rho$ has also a unique decomposition. Note that the subspace spanned by the UPB has exactly six product vectors up to a scalar multiplication. 
In other word, the span of the five states has one more product vector.

The following is a proposition from \cite{Chen_2013}, which summarizes some properties of three-qubit rank-four PPTES.
\begin{proposition}
\label{pro:summary of properties}
Let $\rho$ be a three-qubit PPTES of rank four. Then

(i) each partial transpose of $\rho$ is of rank four.

(ii) $\mathcal{R}(\rho)$ is a completely entangled space (CES), where a CES is a multipartite Hilbert subspace containing no fully product vector.

(iii) Up to the system switch we have
\begin{eqnarray}
\label{eq:bipartite separable form}
\rho=\sum_{i=1}^4\ket{\alpha_i}\bra{\alpha_i}_{A_1}\otimes\ket{\psi_i}\bra{\psi_i}_{A_2A_3},
\end{eqnarray}
where any two of $\ket{\alpha_i}$'s are linearly independent, the four states $\ket{\psi_i}$'s are linearly independent and entangled.

(iv) $\mathcal{R}(\rho)$ and $\ker(\rho)$ contain each exactly four bipartite product vectors for each partition.

(v) Two normalized three-qubit PPTESs of rank four having the same range are equal.
\qed
\end{proposition}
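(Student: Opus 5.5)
The plan is to regard $\rho$ as a bipartite state across each cut $A_j|A_kA_l$, i.e. as a $2\times 4$ state, and to combine three known facts.

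\begin{enumerate}
\item Every three-qubit PPT state of rank at most three is separable \cite{low_rank}.
\item A PPT state on $\bbC^2\otimes\bbC^N$ whose rank equals the rank $N$ of its reduced state on $\bbC^N$ is separable. Moreover it is a sum of $N$ product states, and its partial transpose also has rank $N$ (the Kraus--Cirac--Karnas--Lewenstein canonical form).
\item Theorem \ref{th:GP}.
\end{enumerate}

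The order I would prove the parts in is (iii), (i), (iv), (ii), (v).

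For (iii), first note that $\rho$ is entangled. Hence for some cut, say $A_1|A_2A_3$, the reduced state $\rho_{A_2A_3}$ has rank four. Otherwise every $\rho_{A_kA_l}$ would live in a $2\times3$ space, and I would push this to full separability using fact 1 and the $2\times3$ Peres--Horodecki theorem. Applying fact 2 with $N=4$ then gives the form \eqref{eq:bipartite separable form}.

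\begin{itemize}
\item The $\ket{\psi_i}$ are linearly independent, since the rank is four.
\item The $\ket{\psi_i}$ are entangled, since otherwise $\rho$ would be a sum of a fully product term and a PPT rank-three remainder, which is separable by fact 1.
\item The $\ket{\alpha_i}$ are pairwise independent, since $\ket{\alpha_i}\propto\ket{\alpha_j}$ would let us merge two terms and again produce low-rank separable pieces.
\end{itemize}

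For (i), each partial transpose $\rho^{\Gamma}$ is itself a PPT state, because $(\rho^\Gamma)^\Gamma=\rho\ge0$. So $\operatorname{rank}\rho^\Gamma\le3$ would force $\rho^\Gamma$, and hence $\rho$, to be separable, a contradiction. The upper bound $\operatorname{rank}\rho^\Gamma\le 4$ comes from the rank statement in fact 2, applied to the cut on which the transpose acts.

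For (iv), the four vectors $\ket{\alpha_i}\otimes\ket{\psi_i}$ are in general position for the cut $A_1|A_2A_3$, and $k=4=(2-1)+(4-1)$. Theorem \ref{th:GP} therefore says the range contains exactly these four product vectors. For the kernel I would use the standard duality between $\ker\rho$ and $\mathcal{R}(\rho^\Gamma)$ for such edge states, in which product vectors in the kernel correspond to partially conjugated product vectors in the range of a partial transpose. Combined with (i), this reduces the kernel count to the range count. The other cuts are handled the same way once each is known to have a rank-four reduced state.

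For (ii), suppose $z$ is a fully product vector in $\mathcal{R}(\rho)$. Using the canonical form, I would show that its partial conjugates lie in the corresponding ranges $\mathcal{R}(\rho^\Gamma)$. Then $\rho-\epsilon\ket z\bra z$, with $\epsilon$ maximal subject to keeping $\rho$ and all its partial transposes positive, is a PPT state in which some rank has dropped to three. By fact 1 or by (i) it is separable, so $\rho$ would be separable, a contradiction.

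For (v), let $\rho'$ have the same range as $\rho$. By (iv) it has the same four bipartite product vectors, so
$$\rho'=\sum_i q_i\ket{\alpha_i}\bra{\alpha_i}\otimes\ket{\psi_i}\bra{\psi_i}.$$
Take $\rho-t\rho'$ with $t$ maximal such that the result remains PPT. It is a PPT state of rank at most three, hence separable, so it has no kernel-violating terms left, and this forces $q_i\propto p_i$.

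The main obstacle is step (ii) together with the kernel half of (iv). Both require controlling how the partial transpose acts on range and kernel product vectors rather than only on $\rho$ itself. My first attempt there would be to write $\rho$ explicitly in the canonical form $\begin{pmatrix}I & C\\ C^\dagger & C^\dagger C\end{pmatrix}$ with $C$ normal and read off the ranges directly.
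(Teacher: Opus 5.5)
The paper gives no proof of this proposition. It imports it from \cite{Chen_2013}, so your outline has to stand on its own. It breaks in (iii), and (i), (iv) and (v) all rest on (iii).

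\textbf{Main gap: (iii).} To show the $\ket{\psi_i}$ are entangled, you remove the fully product term $\ket{\alpha_1}\bra{\alpha_1}\otimes\ket{\psi_1}\bra{\psi_1}$ and call the remainder $R$ a PPT state of rank three. Only $R\ge0$ and $R^{T_1}\ge0$ are automatic. The operators $R^{T_2}=\rho^{T_2}-(\ket{\alpha_1}\bra{\alpha_1}\otimes\ket{\psi_1}\bra{\psi_1})^{T_2}$ and $R^{T_3}$ need not be positive, so fact 1 cannot be invoked.

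Removing a product vector $z$ from a PPT state requires every partial conjugate of $z$ to lie in the range of the corresponding partial transpose. That is exactly the obstacle you defer to (ii): the step you treat as easy in (iii) is the one you later call hard. The same objection applies to your pairwise-independence argument. The merged term $\ket{\alpha}\bra{\alpha}\otimes(c_1\ket{\psi_1}\bra{\psi_1}+c_2\ket{\psi_2}\bra{\psi_2})$ may carry an NPT two-qubit factor.

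Two further problems sit in the same place. First, the existence of a cut with a rank-four two-qubit reduction is only asserted. The $2\times3$ Peres--Horodecki theorem gives separability across one cut only, and fact 1 needs rank at most three, so nothing ``pushes'' to full separability. Second, the upper bound in (i) and the phrase ``for each partition'' in (iv) need \emph{all three} two-qubit reductions to have rank four, and this is never shown.

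\textbf{A tool that helps.} Write the canonical form of \cite{biseparable} as $\rho=\sum_\lambda\beta_\lambda\beta_\lambda^\dagger\otimes Q_\lambda$. Here $\lambda$ runs over the distinct eigenvalues of the normal matrix, the $\beta_\lambda$ are pairwise independent, and $\bbC^4=\bigoplus_\lambda\mathcal{R}(Q_\lambda)$. Then $\mathcal{R}(\rho)=\bigoplus_\lambda\beta_\lambda\otimes\mathcal{R}(Q_\lambda)$ and $\mathcal{R}(\rho^{T_A})=\bigoplus_\lambda\bar\beta_\lambda\otimes\mathcal{R}(Q_\lambda)$. A product vector $x\otimes y\in\mathcal{R}(\rho)$ must lie in a single block, so $\bar x\otimes y\in\mathcal{R}(\rho^{T_A})$.

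Consequently, if all three two-qubit reductions have rank four, your subtraction argument works; the two-party transposes follow by complex conjugation. It yields (ii), then entangled $\ket{\psi_i}$, then pairwise independent $\ket{\alpha_i}$, because a block with $\dim\mathcal{R}(Q_\lambda)\ge2$ contains a product two-qubit vector.

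What is genuinely missing is a lemma: a three-qubit PPT state of rank four with some two-qubit reduction of rank at most three is fully separable. Observing that entangled $\ket{\psi_i}$ force $\rho_{A_1A_2}$ and $\rho_{A_1A_3}$ to have rank four does not supply this, since it presupposes the entanglement being proved.

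\textbf{Smaller corrections.} Your lower bound in (i) and the range half of (iv) are fine once (iii) is available.

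(ii) needs no subtraction once (iii) holds. A fully product vector in $\mathcal{R}(\rho)$ is an $A_1|A_2A_3$ product vector. By Theorem \ref{th:GP} it is then a multiple of some $\ket{\alpha_i}\otimes\ket{\psi_i}$, contradicting the entanglement of $\ket{\psi_i}$.

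Your kernel ``duality'' is misstated. Positivity of $\rho$ and $\rho^{\Gamma}$ gives $x\otimes y\in\ker\rho\iff\bar x\otimes y\in\ker\rho^{\Gamma}$, which maps kernel to kernel and does not reduce to a range count. Instead take the dual basis $\chi_j$ with $\langle\chi_j|\psi_i\rangle=\delta_{ij}$. Then $\ker\rho=\operatorname{span}\{\alpha_j^\perp\otimes\chi_j\}$, where $\alpha_j^\perp$ is a nonzero vector orthogonal to $\alpha_j$. These are again four product vectors in general position, so Theorem \ref{th:GP} applies.

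In (v), the maximal $t$ is $0$ unless $\mathcal{R}(\rho'^{\Gamma})=\mathcal{R}(\rho^{\Gamma})$ for every $\Gamma$. This follows from (i) applied to $\frac12(\rho+\rho')$. That state is PPT, has rank four, and is entangled because its range is the CES $\mathcal{R}(\rho)$. Each of its partial transposes has range $\mathcal{R}(\rho^\Gamma)+\mathcal{R}(\rho'^\Gamma)$, which must therefore be four-dimensional. At the maximal $t$, the rank may drop in a partial transpose rather than in $\rho-t\rho'$ itself. Either way fact 1 makes $\rho-t\rho'$ separable. A separable operator whose range lies in a CES is zero, so $\rho=\rho'$.
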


From the proposition above, any three-qubit rank-four PPTES has exactly four bipartite product vectors in its range, which is an important characteristic. For convenience, we have the following definition.
\begin{definition}
\label{de:characteristic set}
Let $\rho$ be a three-qubit rank-four PPTES written as \eqref{eq:bipartite separable form}, by Lemma \ref{le:four vectors}, there are invertible matrices $V_\sigma$ and $D_\sigma$, where $D_\sigma$ is diagonal, such that
\begin{eqnarray}
V_\sigma(\alpha_{\sigma(1)},\alpha_{\sigma(2)},\alpha_{\sigma(3)},\alpha_{\sigma(4)})D_\sigma=\begin{pmatrix}
    1 & 0 & 1 & t_\sigma \\
    0 & 1 & 1 & 1
\end{pmatrix},
\end{eqnarray}
where $\sigma\in S_4$. Let
\begin{eqnarray}
T_\rho:=\{t_\sigma:\sigma\in S_4\}
\end{eqnarray}
be the characteristic set of $\rho$.
\qed
\end{definition}

Below is a necessary and sufficient condition for two three-qubit rank-four PPTESs to be SLOCC-equivalent.
\begin{lemma}
\label{le:SLOCC-equivalent criterion}
Let $P,Q$ be three-qubit rank-four PPTESs such that
\begin{eqnarray}
P=\sum_{i=1}^4\ket{\alpha_i}\bra{\alpha_i}_{A_i}\otimes\ket{\psi_i}\bra{\psi_i}_{A_2A_3},\quad Q=\sum_{i=1}^4\ket{\beta_i}\bra{\beta_i}_{A_i}\otimes\ket{\phi_i}\bra{\phi_i}_{A_2A_3},
\end{eqnarray}
where $\alpha_i,\beta_i,\psi_i,\phi_1$ satisfy the conditions in (iii) of Proposition \ref{pro:summary of properties}. Then

(i) $P$ and $Q$ are SLOCC-equivalent if and only if there is an invertible product transformation $W=W_1\otimes W_2\otimes W_3$ and a permutation $\sigma$ of $\{1,2,3,4\}$ such that
\begin{eqnarray}
\label{eq:transformaton}
\ket{\alpha_i}\otimes\ket{\psi_i}=d_iW\ket{\beta_{\sigma(i)}}\otimes\ket{\phi_{\sigma(i)}},\quad i=1,2,3,4,
\end{eqnarray}
where $d_i$ is a nonzero complex number for $i=1,2,3,4$. In particular we can choose $\abs{d_i}=1$ for any $i$.

(ii) A necessary condition that $P$ is SLOCC-equivalent to $Q$ is that $P$ and $Q$ have the same characteristic set as Definition \ref{de:characteristic set}.
\end{lemma}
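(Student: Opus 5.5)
We prove (i) by comparing the unique biseparable decompositions of $P$ and $Q$, and then read off (ii) from (i).

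\emph{Sufficiency in (i).} Suppose \eqref{eq:transformaton} holds. Then
\[
W\Big(\sum_i \ket{\beta_{\sigma(i)}}\bra{\beta_{\sigma(i)}}\otimes\ket{\phi_{\sigma(i)}}\bra{\phi_{\sigma(i)}}\Big)W^\dagger=\sum_i |d_i|^{-2}\ket{\alpha_i}\bra{\alpha_i}\otimes\ket{\psi_i}\bra{\psi_i}.
\]
The left side is $WQW^\dagger$, since the sum is only a relabelling of $Q$. So $WQW^\dagger$ is a positive combination of the same rank-one terms that make up $P$, and it has the same range as $P$, namely the span of $\ket{\alpha_i}\otimes\ket{\psi_i}$. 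Up to normalization, $WQW^\dagger$ is again a three-qubit rank-four PPTES, because SLOCC preserves rank, PPT and entanglement. By (v) of Proposition \ref{pro:summary of properties}, two normalized such states with the same range coincide. Hence $P$ and $Q$ are SLOCC-equivalent.

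\emph{Necessity in (i).} Suppose $P\propto WQW^\dagger$ with $W=W_1\otimes W_2\otimes W_3$ invertible. Then
\[
WQW^\dagger=\sum_i W_1\ket{\beta_i}\bra{\beta_i}W_1^\dagger\otimes (W_2\otimes W_3)\ket{\phi_i}\bra{\phi_i}(W_2\otimes W_3)^\dagger.
\]
This is again a decomposition of the form \eqref{eq:bipartite separable form} for the $A_1$-$A_2A_3$ cut. We need this decomposition of $P$ to be unique. The range of $P$ has exactly four bipartite product vectors for this cut, by (iv) of Proposition \ref{pro:summary of properties}. Any decomposition of $P$ into bipartite product terms uses only product vectors lying in $\mathcal R(P)$, and it needs at least four of them since the rank is four. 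So every such decomposition uses exactly these four vectors up to scalars, and the vectors are linearly independent. The vectors $W_1\ket{\beta_i}\otimes W_2\otimes W_3\ket{\phi_i}$ therefore equal $\ket{\alpha_{\sigma^{-1}(i)}}\otimes\ket{\psi_{\sigma^{-1}(i)}}$ up to nonzero scalars. Relabelling the permutation gives \eqref{eq:transformaton}.

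To get $|d_i|=1$, note that linear independence forces the coefficients in the two expansions of $P$ to match termwise, so $|d_i|^2$ equals the fixed overall proportionality constant $c$. Replacing $W$ by $W/\sqrt{c}$, which is still an invertible product operator, makes every $|d_i|=1$.

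One point needs care here. The cut matters: the system switch in Proposition \ref{pro:summary of properties}(iii) is the same for $P$ and $Q$ as written, and $W$ is a local product, so $W$ preserves the $A_1$-$A_2A_3$ structure.

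\emph{Part (ii).} By (i), $\ket{\alpha_i}=c_iW_1\ket{\beta_{\sigma(i)}}$, since equality of product vectors forces equality of their factors up to scalars. The characteristic set $T$ is defined through Lemma \ref{le:four vectors}, where $t$ is a cross-ratio of the determinants $\det(x_i,x_j)$. Replacing $x$ by $W_1x$ multiplies every determinant by $\det W_1$, and rescaling each $x_i$ by $c_i$ contributes factors that cancel in the ratio. So each $t_\sigma$ is unchanged by $W_1$ and by the scalars $c_i$. Permuting the vectors only relabels $\sigma$, because $T$ is defined as the union over all of $S_4$. Hence $T_P=T_Q$.

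The main obstacle is the uniqueness step in the necessity direction. It requires arguing cleanly that any decomposition of $P$ into bipartite product terms must use exactly the four product vectors in its range.
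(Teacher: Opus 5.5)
Your proposal is correct and follows the paper's proof step for step: (iv) of Proposition~\ref{pro:summary of properties} identifies the vectors $W_1\ket{\beta_i}\otimes(W_2\otimes W_3)\ket{\phi_i}$ with the $\ket{\alpha_i}\otimes\ket{\psi_i}$ up to order and scalars, (v) gives the converse, and invariance of the cross-ratio under $W_1$ and rescaling gives (ii). Your linear-independence argument for $|d_i|=1$ is more precise than the paper's appeal to (v), but since you have $|d_i|^2=c$, the rescaling should be $W\mapsto\sqrt{c}\,W$ rather than $W/\sqrt{c}$.
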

\begin{proof}
(i) If $P$ and $Q$ are SLOCC-equivalent, we have $VQV^\dagger=P$ where $V=V_1\otimes V_2\otimes V_3$ is invertible. Let
\begin{eqnarray}
\ket{\tilde{\beta}_i}=V_1\ket{\beta_i},\ \ket{\tilde{\phi}_i}=V_2\otimes V_3\ket{\phi_i},\ i=1,2,3,4,
\end{eqnarray}
then
\begin{eqnarray}
VQV^\dagger=\sum_{i=1}^4\ket{\tilde{\beta}_i}\bra{\tilde{\beta}_i}_{A_i}\otimes\ket{\tilde{\phi}_i}\bra{\tilde{\phi}_i}_{A_2A_3}=P=\sum_{i=1}^4\ket{\alpha_i}\bra{\alpha_i}_{A_i}\otimes\ket{\psi_i}\bra{\psi_i}_{A_2A_3}.
\end{eqnarray}
By (iv) of Proposition \ref{pro:summary of properties}, $\mathcal{R}(P)$ has exactly four bipartite product vectors in $\bbC^2\otimes\bbC^4$, thus \eqref{eq:transformaton} holds for $W=V$ and some permutation. By (v) of Proposition \ref{pro:summary of properties}, we have $|d_i|=1$ for any $i$. Conversely if \eqref{eq:transformaton} holds, let $\rho=WQW^\dagger$, then
\begin{eqnarray}
\rho=\sum_{i=1}^4\frac{1}{\abs{d_{\sigma^{-1}(i)}}^2}\ket{\alpha_{\sigma^{-1}(i)}}\bra{\alpha_{\sigma^{-1}(i)}}_{A_1}\otimes\ket{\psi_{\sigma^{-1}(i)}}\bra{\psi_{\sigma^{-1}(i)}}_{A_2A_3}.
\end{eqnarray}
By (v) of Proposition \ref{pro:summary of properties}, $\rho$ and $P$ are equal up to scalar multiplication, thus $P$ and $Q$ are SLOCC-equivalent.

(ii) If $\eqref{eq:transformaton}$ holds then we have
\begin{eqnarray}
\ket{\alpha_i}=c_iW_1\ket{\beta_{\sigma(i)}},\ i=1,2,3,4,
\end{eqnarray}
where $c_i\neq0$, then
\begin{eqnarray}
\label{eq:alpha beta}
\frac{\det(\alpha_1,\alpha_3)\det(\alpha_2,\alpha_4)}{\det(\alpha_1,\alpha_4)\det(\alpha_2,\alpha_3)}=\frac{\det(\beta_{\sigma(1)},\beta_{\sigma(3)})\det(\beta_{\sigma(2)},\beta_{\sigma(4)})}{\det(\beta_{\sigma(1)},\beta_{\sigma(4)})\det(\beta_{\sigma(2)},\beta_{\sigma(3)})}.
\end{eqnarray}
Notice that any single element of the set \eqref{eq:set of t} determines the set, thus $P$ and $Q$ have the same characteristic set.
\end{proof}

\subsection{Example}
\label{sec:example}

In this subsection, we introduce several examples of three-qubit rank-four PPTESs, some of which will be used subsequently in this paper.

\begin{example}
\label{ex:another representation for UPB states}
We know that an unextendible product basis (UPB) in three-qubit system can be written as
\begin{align}
\label{eq:UPB}
\{\ket{\xi_j}&=\ket{a_j,b_j,c_j}\}:j=1,2,3,4\}, \\
\ket{\xi_1}&=\ket{000}, \\
\ket{\xi_2}&=\ket{1}\otimes\ket{x_2}\otimes\ket{x_3}, \\
\ket{\xi_3}&=\ket{x_1}\otimes\ket{1}\otimes\ket{x'_3}, \\
\ket{\xi_4}&=\ket{x'_1}\otimes\ket{x'_2}\otimes\ket{1},
\label{eq:product vectors}
\end{align}
where $x_k=(\cos\theta_k,\sin\theta_k)^T$, $x'_k=(-\sin\theta_k,\cos\theta_k)^T$ and $\theta_k\in(0,\frac{\pi}{2})$ for each $k$. Then
\begin{eqnarray}
\rho=\frac{1}{4}(I_8-\sum_{j=1}^4\ket{\xi_j}\bra{\xi_j})
\end{eqnarray}
is a three-qubit rank-four PPTES. If we write it as \eqref{eq:bipartite separable form}, since $\rho$ is a projector, we have
\begin{eqnarray}
\label{eq:UPB and bipartite state}
\rho=\sum_{i=1}^4\frac{1}{4}\ket{\alpha_i}\bra{\alpha_i}\otimes\ket{\psi_i}\bra{\psi_i},
\end{eqnarray}
where $\alpha_i,\psi_i$ are all unit vectors. Based on the orthogonal relationship between the UPB and the range of $\rho$, i.e., $\ket{\alpha_i}\otimes\ket{\psi_i}\perp\ket{\xi_j}$ for $i,j=1,2,3,4$, we obtained that
\begin{align}
\alpha_i&\perp a_i,\quad i=1,2,3,4, \\
\psi_i&\perp b_j\otimes c_j,\quad i\neq j.
\end{align}
Then we have
\begin{eqnarray}
\ket{\alpha_1}=\ket{1},\quad\ket{\alpha_2}=\ket{0},\quad\ket{\alpha_3}=\ket{x_1'},\quad\ket{\alpha_4}=\ket{x_1}
\end{eqnarray}
and $\psi_i$ is the normalization of $e_i$, where
\begin{align}
\label{eq:phi1}
e_1&=(-\frac{\cos\theta_2\sin^2\theta_3}{\cos\theta_3}-\frac{\sin^2\theta_2}{\cos\theta_2\cos\theta_3},\cos\theta_2\sin\theta_3,\sin\theta_2\cos\theta_3,\sin\theta_2\sin\theta_3), \\
e_2&=(0,\cos\theta_2\sin\theta_3,\sin\theta_2\cos\theta_3,\sin\theta_2\sin\theta_3), \\
e_3&=(0,\cos\theta_2,-\frac{\sin\theta_3}{\sin\theta_2\cos\theta_3},\sin\theta_2), \\
e_4&=(0,-\frac{\sin\theta_2}{\cos\theta_2\sin\theta_3},\cos\theta_3,\sin\theta_3).
\label{eq:phi4}
\end{align}
\qed
\end{example}

The following is an example from \cite{general_position}.
\begin{example}
\label{ex:PPTES by six vectors}
Let
\begin{align}
\label{eq:z_1}
\ket{z_1}&=\ket{e_2}\otimes(\ket{e_1}+2\ket{e_2})\otimes\ket{e_1}, \\
\ket{z_2}&=\ket{e_2}\otimes(\ket{e_1}+\ket{e_2})\otimes(\ket{e_1}+\ket{e_2}), \\
\ket{z_3}&=\ket{e_1}\otimes\ket{e_2}\otimes(\ket{e_1}-\ket{e_2}), \\
\ket{z_4}&=(\ket{e_1}+\ket{e_2})\otimes\ket{e_2}\otimes(\ket{e_1}+\ket{e_2}), \\
\ket{z_5}&=(\ket{e_1}+2\ket{e_2})\otimes\ket{e_1}\otimes\ket{e_2}, \\
\ket{z_6}&=(\ket{e_1}+\ket{e_2})\otimes(\ket{e_1}-2\ket{e_2})\otimes\ket{e_2}.
\label{eq:z_6}
\end{align}
The six vectors span the five-dimensional space $V$ that have no more other product vectors, and $\ket{e_1}\otimes\ket{e_1}\otimes\ket{e_1}$ is the exact one product vector in the orthogonal complement of $V$ up to scalar multiplication. Let $\ket{\tilde{z}_i}$ be normalizations of the six product vectors and
\begin{eqnarray}
Q_p=\frac{1}{\alpha-1}(\alpha\sum_{i=1}^5p_i\ket{\tilde{z}_i}\bra{\tilde{z}_i}-\ket{\tilde{z}_6}\bra{\tilde{z}_6}),
\end{eqnarray}
where $\alpha=\frac{2}{9p_1}+\frac{2}{9p_2}+\frac{2}{9p_3}+\frac{8}{81p_4}+\frac{125}{81p_5}$ with $\sum_{i=1}^5p_i=1,p_i>0$. For each $p$, \textbf{the kernel of $Q_p$ contains no product vector} and $Q_p$ is a three-qubit rank-four PPTES that cannot be constructed by UPB.
\qed
\end{example}

The following is an example from \cite{Lorentz_invariant}.
\begin{example}
\label{ex:PPTES=typeII}
Let
\begin{align}
(b_1,b_2,b_3,b_4)&=\begin{pmatrix}
    1 & 0 & 1 & -t \\
    0 & 1 & -1 & 1
\end{pmatrix}, \\
(u_1,u_2,u_3,u_4)&=\begin{pmatrix}
    0 & -t & -t & -t \\
    1 & 0 & 1 & t \\
    -1 & 0 & 1 & t \\
    0 & 1 & -1 & -1
\end{pmatrix},
\label{eq:u_i}
\end{align}
and $\otimes_s$ be the split tensor product in Definition \ref{de:split tensor product}. Then we have
\begin{eqnarray}
\label{eq:PPT with vanishing quadratic invariant}
\rho=\alpha\sum_{i=1}^4\lambda_ib_ib_i^\dagger\otimes u_iu_i^\dagger=\alpha\sum_{i=1}^4\lambda_ib_ib_i^\dagger\otimes_su_iu_i^\dagger=\alpha\sum_{i=1}^4\lambda_iu_iu_i^\dagger\otimes b_ib_i^\dagger
\end{eqnarray}
with
$$\lambda_1=|t|^2|1-t|^2,\quad\lambda_2=|1-t|^2,\quad\lambda_3=|t|^2,\quad\lambda_4=1$$
and
$$\alpha=\frac{1}{5|t|^4+10|t|^2+1+(3|t|^2+1)|1-t|^2}.$$
As will be seen later, $\rho$ is a three-qubit rank-four PPTES with zero Lorentz invariant in Definition \ref{de:Lorentz invariant}, which means that $\rho$ cannot be constructed by UPB.
\qed
\end{example}

\subsection{Lorentz invariant}
\label{sec:Lorentz invariant}

An SL-equivalence invariant remains invariant under SL-transformation, which is a special kind of SLOCC-transformation with the product transformation having determinant one. The following invariant is crucial in this paper.
\begin{definition}
\label{de:Lorentz invariant}
We say that $\epsilon =
\begin{pmatrix} 0 & 1 \\
                -1 & 0
\end{pmatrix}$ is the two-dimensional Levi-Civita symbol. One can show that
\begin{eqnarray}
\forall V\in SL(2,\bbC),\quad V\epsilon V^{T}=\epsilon.
\end{eqnarray}
We refer to $\epsilon_n=\epsilon^{\otimes n}$ as the tensor product of $n$ $\epsilon$'s, then for square matrices $A$ and $B$ of order $2^n$, the expression
\begin{eqnarray}
\label{eq:A and B}
\Tr(A^T\epsilon_nB\epsilon_n)
\end{eqnarray}
is \textbf{invariant} under transposition and the product transformations $\tilde{A} = VAW,  \tilde{B} = VBW$, where $V,W\in SL(2,\bbC)^{\otimes n}$. One can show that
\begin{eqnarray}
\label{eq:A and B under the same transformations}
\Tr(\tilde{A}^T\epsilon_n\tilde{B}\epsilon_n)=\Tr(W^TA^TV^T\epsilon_nVBW\epsilon_n)=\Tr(A^T\epsilon_nBW\epsilon_nW^T)=\Tr(A^T\epsilon_nB\epsilon_n).
\end{eqnarray}
For pure states $\rho_i=\xi_i\xi_i^\dagger$, since $\epsilon_n^T=(-1)^n\epsilon_n$, we have
\begin{eqnarray}
\label{eq:pure state's invariant}
\Tr(\rho_i^T\epsilon_n\rho_j\epsilon_n)=\Tr(\xi_i^*\xi_i^T\epsilon_n\xi_j\xi_j^\dagger\epsilon_n)=[\xi_i^T\epsilon_n\xi_j][\xi_j^\dagger\epsilon_n\xi_i^*]=(-1)^n|\xi_i^T\epsilon_n\xi_j|^2.
\end{eqnarray}
Let $\rho=\sum_{i=1}^k\lambda_i\rho_i$ where $\rho_i=\xi_i\xi_i^\dagger$ is a $n$-qubit pure state for each $i$, by \eqref{eq:pure state's invariant}, 
\begin{eqnarray}
\label{eq:Lorentz invariant}
I_{\rho}:=\Tr(\rho^T\epsilon_n\rho\epsilon_n)=\sum_{1\leq i,j\leq k}(-1)^n\lambda_i\lambda_j|\xi_i^T\epsilon_n\xi_j|^2.
\end{eqnarray}
By \eqref{eq:A and B under the same transformations} with $W=V^\dagger$, $I_\rho$ is a SL-equivalent invariant, which is called the Lorentz invariant. For $n$-qubit states with zero Lorentz invariant, the invariant remains zero under SLOCC-transformation.
\qed
\end{definition}

We perform a classification like \cite{Lorentz_invariant}.
\begin{definition}
\label{de:type I and type II}
Based on whether the Lorentz invariant in \eqref{eq:Lorentz invariant} of a density matrix is zero, we can classify three-qubit rank-four PPTES into two types, namely \textbf{type I} of nonzero Lorentz invariant and \textbf{type II} of zero Lorentz invariant. 
\qed
\end{definition}
Three-qubit rank-four PPTES in Example \ref{ex:another representation for UPB states} and \ref{ex:PPTES=typeII} belong to \textbf{type I} and \textbf{type II}, respectively. There exist states of \textbf{type I} which cannot be constructed by UPB such as some states in Example \ref{ex:PPTES by six vectors}. Later we will prove that states of \textbf{type II} have been exactly constructed.

It turns out that a PPTES in \textbf{type II} has the following property.
\begin{theorem}
\label{th:type II and Bell states}
A three-qubit rank-four PPTES $\rho$ can be written as \eqref{eq:bipartite separable form}. Then $\rho$ belongs to \textbf{type II} if and only if there are invertible matrices $W=P\otimes Q\in\bbM_2(\bbC)\otimes\bbM_2(\bbC)$ and diagonal $D\in\bbM_4(\bbC)$, such that
\begin{eqnarray}
\label{eq:four vectors to Bell states}
W(\psi_1,\psi_2,\psi_3,\psi_4)D=\begin{pmatrix}
    1 & 1 & 0 & 0 \\
    0 & 0 & 1 & 1 \\
    0 & 0 & 1 & -1 \\
    1 & -1 & 0 & 0
\end{pmatrix}.
\end{eqnarray}
\qed
\end{theorem}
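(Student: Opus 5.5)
The plan is to compute $I_\rho$ directly from the decomposition \eqref{eq:bipartite separable form}. Then I will translate the condition $I_\rho=0$ into an orthogonality condition for the $\psi_i$ with respect to a complex symmetric bilinear form on $\bbC^2\otimes\bbC^2$, and finally use that $SL(2,\bbC)\otimes SL(2,\bbC)$ acts on $\bbC^4$ as the full complex orthogonal group $SO(4,\bbC)$ of that form.

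\emph{Step 1: reduce $I_\rho=0$ to pairwise conditions.} Put $\xi_i=\alpha_i\otimes\psi_i$ and $n=3$ in \eqref{eq:Lorentz invariant}, with $\epsilon_3=\epsilon\otimes(\epsilon\otimes\epsilon)$. The entries factor as
\[
\xi_i^T\epsilon_3\xi_j=(\alpha_i^T\epsilon\alpha_j)(\psi_i^T\epsilon_2\psi_j),\qquad \alpha_i^T\epsilon\alpha_j=\det(\alpha_i,\alpha_j).
\]
Hence $I_\rho=-\sum_{i,j}|\det(\alpha_i,\alpha_j)|^2|\psi_i^T\epsilon_2\psi_j|^2$ is a sum of nonpositive terms. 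The diagonal terms vanish because $\det(\alpha_i,\alpha_i)=0$. For $i\neq j$ we have $\det(\alpha_i,\alpha_j)\neq0$ by (iii) of Proposition \ref{pro:summary of properties}. Therefore
\[
I_\rho=0 \iff B(\psi_i,\psi_j)=0 \text{ for all } i\neq j,\qquad \text{where } B(u,v):=u^T\epsilon_2v .
\]

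\emph{Step 2: properties of $B$.} Since $\epsilon_2^T=\epsilon_2$ and $\epsilon_2$ is invertible, $B$ is a nondegenerate symmetric bilinear form on $\bbC^4$. If $u$ is identified with a $2\times2$ matrix $M_u$, then $B(u,u)=2\det M_u$, so $B(\psi_i,\psi_i)\neq0$ exactly because $\psi_i$ is entangled. By the computation in \eqref{eq:A and B under the same transformations}, $B((P\otimes Q)u,(P\otimes Q)v)=\det P\det Q\,B(u,v)$.

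For the ``if'' direction, let $\Phi$ denote the matrix on the right of \eqref{eq:four vectors to Bell states}. A direct check shows that its columns are pairwise $B$-orthogonal, with $B$-values $2,-2,-2,2$. Since $W^{-1}$ is again a product $P^{-1}\otimes Q^{-1}$ and $D$ only rescales columns, the $\psi_i$ are pairwise $B$-orthogonal, and Step 1 gives $I_\rho=0$.

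\emph{Step 3: the ``only if'' direction.} Assume the $\psi_i$ are pairwise $B$-orthogonal with $B(\psi_i,\psi_i)\neq0$. Over $\bbC$ we may rescale each $\psi_i$ so that $B(\psi_i,\psi_i)$ equals the $B$-value of the $i$-th column of $\Phi$. Then the linear map $R$ sending $\psi_i$ to the $i$-th column of $\Phi$ preserves $B$, so $R\in O(4,\bbC)$ with respect to $B$. If $\det R=-1$, replace the scaling of $\psi_4$ by its negative; this is still allowed in $D$ and makes $\det R=1$.

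It remains to show that every $R\in SO(4,\bbC)$ has the form $c\,P\otimes Q$ with $P,Q\in SL(2,\bbC)$. Any leftover scalar $c=\pm1$ is absorbed into $D$.

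\emph{Main obstacle.} The key step is the surjectivity of the homomorphism $SL(2,\bbC)\times SL(2,\bbC)\to SO(4,\bbC)$, $(P,Q)\mapsto P\otimes Q$, where $SO(4,\bbC)$ is taken with respect to $\epsilon_2$. This is the classical double cover, and I would establish it as follows:
\begin{itemize}
\item The image preserves $B$ by Step 2, and the map has finite kernel $\{\pm(I,I)\}$.
\item Both groups are connected of complex dimension $6$.
\item Hence the image is a closed subgroup of full dimension in the connected group $SO(4,\bbC)$, and so equals it.
\end{itemize}
As an alternative to this Lie-theoretic argument, one could reduce to the real case via Lemma \ref{le:permutations of Bell states} and a Gram--Schmidt-type normalization, but the dimension argument is the cleanest route.
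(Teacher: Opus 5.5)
Your proof is correct. Your Step 1 and your ``if'' direction match the paper's reduction of $I_\rho=0$ to $\psi_i^T(\epsilon\otimes\epsilon)\psi_j=0$ for $i\neq j$. Your ``only if'' direction takes a genuinely different route.

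The paper stays elementary. It reshapes each $\psi_i$ into an invertible $2\times2$ matrix $A_i=F(\psi_i)$ and uses that $\Tr(A_i^T\epsilon A_j\epsilon)=0$ survives $A\mapsto PAQ$. It normalizes $A_1=I_2$ and notes that orthogonality to $I_2$ forces $\Tr A_i=0$. This excludes a Jordan block for $A_2$, so $A_2\sim s_1\,\mathrm{diag}(1,-1)$. Solving the remaining equations makes $A_3,A_4$ off-diagonal with entry ratios $\pm s$. A final diagonal rescaling by $\sqrt{s}$ produces $P$ and $Q$ explicitly.

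You instead identify $B$ as the polarized determinant form and match norms by rescaling. This gives an element of $O(B)$, whose determinant you repair with a sign flip. You then invoke the double cover $SL(2,\bbC)\times SL(2,\bbC)\to SO(4,\bbC)$.

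Your argument is shorter and explains structurally why the Bell basis appears: product transformations act transitively, up to scaling, on $B$-orthogonal frames of non-isotropic vectors. The paper's computation in effect proves exactly this transitivity by hand, without Lie theory, and gives a constructive recipe.

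Two details to tighten:
\begin{itemize}
\item Well-definedness of $R$ needs the $\psi_i$ to form a basis. This follows from (iii) of Proposition \ref{pro:summary of properties}, or directly from pairwise $B$-orthogonality together with $B(\psi_i,\psi_i)\neq0$.
\item Rather than asserting that the image is closed, argue as follows. The finite kernel makes the differential an isomorphism of $6$-dimensional Lie algebras, so the image contains a neighborhood of the identity. It is therefore an open subgroup, hence equals the connected group $SO(4,\bbC)$.
\end{itemize}
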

See proof in Appendix \ref{app:proof for theorem 16}.

\section{Result}
\label{sec:res}

This section consists of three parts. In Sec. \ref{sec:type I} we study three-qubit rank-four PPTES of \textbf{type I}, which contain PPTES constructed by UPB. We have found a method to distinguish whether a PPTES can be constructed by UPB. In Sec. \ref{sec:type II} we study three-qubit rank-four PPTES of \textbf{type II}. We have exhausted all such states.  In Sec. \ref{sec:multi-qubit} we study the Lorentz invariant of multi-qubit states and try to obtain its range.

\subsection{Charaterization of PPTES of type I}
\label{sec:type I}

\begin{lemma}
\label{le:invariant of upb}
Let
\begin{eqnarray}
\label{eq:rho=2x2x2pptesRank4}    
\rho = \frac{1}{4}(I_8-\sum_{i=1}^4\xi_i\xi_i^{\dagger})
\end{eqnarray}
be the three-qubit rank-four PPTES constructed by UPB as Example \ref{ex:another representation for UPB states} and $I_\rho$ the Lorentz invariant of $\rho$. Then the range of $I_{\rho}$ is $(-1/4,0)$. Thus, such states belong to \textbf{type I}.
\end{lemma}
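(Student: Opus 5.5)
The plan is to compute $I_\rho$ in closed form from Definition \ref{de:Lorentz invariant} with $n=3$, using the parametrization of Example \ref{ex:another representation for UPB states}, and then to determine the range of the resulting trigonometric expression. Write $P_i=\xi_i\xi_i^\dagger$, so $\rho=\frac14(I_8-\sum_iP_i)$. Expanding $\Tr(\rho^T\epsilon_3\rho\epsilon_3)$ bilinearly gives three kinds of terms.

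\emph{Constant term.} Since $\epsilon^2=-I_2$, we have $\epsilon_3^2=-I_8$, hence $\Tr(\epsilon_3^2)=-8$.

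\emph{Cross terms.} We have $\Tr(I\,\epsilon_3P_j\epsilon_3)=\Tr(\epsilon_3^2P_j)=-1$, and likewise $\Tr(P_i^T\epsilon_3\epsilon_3)=-1$. Together the cross terms contribute $-2\cdot 4\cdot(-1)=+8$, which cancels the constant term.

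\emph{Quadratic terms.} By \eqref{eq:pure state's invariant} these are $-\sum_{i,j}|\xi_i^T\epsilon_3\xi_j|^2$. For product vectors $\xi_i=a_i\otimes b_i\otimes c_i$ the bracket factorizes as
\begin{eqnarray*}
\xi_i^T\epsilon_3\xi_j=\det(a_i,a_j)\det(b_i,b_j)\det(c_i,c_j),
\end{eqnarray*}
up to sign, so the diagonal terms vanish. Therefore
\begin{eqnarray*}
I_\rho=-\frac{1}{8}\sum_{i<j}\prod_{k=1}^3|\det(\cdot,\cdot)|^2 .
\end{eqnarray*}

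Next I evaluate the six pairwise products with $x_k=(\cos\theta_k,\sin\theta_k)^T$ and $x_k'=(-\sin\theta_k,\cos\theta_k)^T$. A direct computation gives
\begin{eqnarray*}
(1,2),(3,4)&:&\ \sin^2\theta_2\sin^2\theta_3,\\
(1,3),(2,4)&:&\ \sin^2\theta_1\cos^2\theta_3,\\
(1,4),(2,3)&:&\ \cos^2\theta_1\cos^2\theta_2 .
\end{eqnarray*}
Since each value occurs twice, this yields
\begin{eqnarray*}
I_\rho=-\frac14 f,\qquad f=u_2u_3+u_1(1-u_3)+(1-u_1)(1-u_2),
\end{eqnarray*}
where $u_k=\sin^2\theta_k$ ranges over $(0,1)$ as $\theta_k$ ranges over $(0,\pi/2)$.

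It remains to show that $f$ maps $(0,1)^3$ onto $(0,1)$. For $f>0$, each summand is a product of numbers in $(0,1)$ and is therefore positive. For $f<1$, I expand to obtain the identity
\begin{eqnarray*}
1-f=u_2(1-u_1)(1-u_3)+u_1u_3(1-u_2),
\end{eqnarray*}
in which both terms are strictly positive on the open cube. For surjectivity, $f$ is continuous on the connected set $(0,1)^3$. It tends to $0$ as $(u_1,u_2,u_3)\to(1,0,1)$ and tends to $1$ as $(u_1,u_2,u_3)\to(0,0,0)$, so its image is all of $(0,1)$. Hence the range of $I_\rho$ is exactly $(-1/4,0)$, and in particular $I_\rho\neq0$, so these states are of \textbf{type I}.

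The only delicate step is the bookkeeping of signs in the expansion: the factor $(-1)^3$, the fact that $\epsilon_3^2=-I_8$, and the correct pairing of the determinants. I would double-check this by verifying that the constant term and the cross terms cancel exactly. The range argument itself becomes routine once the factorized form of $1-f$ is found.
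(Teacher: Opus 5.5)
Your proposal is correct and follows essentially the same route as the paper: you expand $\Tr(\rho^T\epsilon_3\rho\epsilon_3)$, cancel the constant term against the cross terms, factor $\xi_i^T\epsilon_3\xi_j$ into three $2\times2$ determinants, and arrive at $I_\rho=-\frac14(\cos^2\theta_1\cos^2\theta_2+\sin^2\theta_1\cos^2\theta_3+\sin^2\theta_2\sin^2\theta_3)$. Your identity $1-f=u_2(1-u_1)(1-u_3)+u_1u_3(1-u_2)$, combined with the continuity and limit argument, rigorously supplies the range step that the paper only asserts ``after analysis.''
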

\begin{proof}
Let $\rho_i=\xi_i\xi_i^\dagger$, then $\Tr(\rho_i)=1$ for each $i$. According to \eqref{eq:Lorentz invariant}, we have
\begin{align}
I_\rho&=\frac{1}{16}\Tr[(I_8-\sum_{i=1}^4\rho_i^T)\epsilon_3(I_8-\sum_{i=1}^4\rho_i)\epsilon_3] \\
&=\frac{1}{16}\big[\Tr(\epsilon_3^2)-\Tr(\sum_{i=1}^4\rho_i^T\epsilon_3^2)-\Tr(\epsilon_3\sum_{i=1}^4\rho_i\epsilon_3)+\sum_{1\leq i,j\leq4}\Tr(\rho_i^T\epsilon_3\rho_j\epsilon_3)\big] \\
&=-\frac{1}{16}\sum_{1\leq i,j\leq4}|\xi_i^T\epsilon_3\xi_j|^2.
\label{eq:xi_i and xi_j}
\end{align}
Since $\xi_i$'s are product vectors in $\bbC^2\otimes\bbC^2\otimes\bbC^2$ as \eqref{eq:UPB}-\eqref{eq:product vectors}, we calculate \eqref{eq:xi_i and xi_j} and obtain
\begin{eqnarray}
\label{eq:invariant of PPTES by UPB}
I_\rho=-\frac{1}{4}(\cos^2\theta_1\cos^2\theta_2+\sin^2\theta_1\cos^2\theta_3+\sin^2\theta_2\sin^2\theta_3),\quad\theta_{1,2,3}\in(0,\frac{\pi}{2}).
\end{eqnarray}
After analysis, the codomain of \eqref{eq:invariant of PPTES by UPB} is $(-\frac{1}{4},0)$. By Definition \ref{de:type I and type II}, $\rho$ belongs to \textbf{type I}.
\end{proof}

According to the lemma above, we know that those PPTESs contructed by UPB belong to \textbf{type I}. Clearly, there are a number of other states in \textbf{type I} up to SLOCC-equivalence, even for the same Lorentz invariant. Below are examples.
\begin{example}
\label{ex:nonzero}
There exist two three-qubit rank-four PPTESs of \textbf{type I} with the identical Lorentz invariant and not constructed by UPB, and they are not SLOCC-equivalent.

We consider three-qubit rank-four PPTESs in Example \ref{ex:PPTES by six vectors}, which cannot be constructed by UPB. One can calculate that $Q_p=\frac{1}{\alpha-1}[0\oplus\rho]$, where
$$\rho=\left(\begin{smallmatrix}
    \frac{1}{5}\alpha p_5-\frac{1}{10} & 0 & \frac{1}{5} & 0 & \frac{2}{5}\alpha p_5-\frac{1}{10} & 0 & \frac{1}{5} \\
    0 & \frac{1}{2}\alpha p_3+\frac{1}{4}\alpha p_4 & -\frac{1}{2}\alpha p_3+\frac{1}{4}\alpha p_4 & 0 & 0 & \frac{1}{4}\alpha p_4 & \frac{1}{4}\alpha p_4 \\
    \frac{1}{5} & -\frac{1}{2}\alpha p_3+\frac{1}{4}\alpha p_4 & \frac{1}{2}\alpha p_3+\frac{1}{4}\alpha p_4-\frac{2}{5} & 0 & \frac{1}{5} & \frac{1}{4}\alpha p_4 & \frac{1}{4}\alpha p_4-\frac{2}{5} \\
    0 & 0 & 0 & \frac{1}{5}\alpha p_1+\frac{1}{4}\alpha p_2 & \frac{1}{4}\alpha p_2 & \frac{2}{5}\alpha p_1+\frac{1}{4}\alpha p_2 & \frac{1}{4}\alpha p_2 \\
    \frac{2}{5}\alpha p_5-\frac{1}{10} & 0 & \frac{1}{5} & \frac{1}{4}\alpha p_2 & \frac{1}{4}\alpha p_2+\frac{4}{5}\alpha p_5-\frac{1}{10} & \frac{1}{4}\alpha p_2 & \frac{1}{4}\alpha p_2+\frac{1}{5} \\
    0 & \frac{1}{4}\alpha p_4 & \frac{1}{4}\alpha p_4 & \frac{2}{5}\alpha p_1+\frac{1}{4}\alpha p_2 & \frac{1}{4}\alpha p_2 & \frac{4}{5}\alpha p_1+\frac{1}{4}\alpha p_2+\frac{1}{4}\alpha p_4 & \frac{1}{4}\alpha p_2+\frac{1}{4}\alpha p_4 \\
    \frac{1}{5} & \frac{1}{4}\alpha p_4 & \frac{1}{4}\alpha p_4-\frac{2}{5} & \frac{1}{4}\alpha p_2 & \frac{1}{4}\alpha p_2+\frac{1}{5} & \frac{1}{4}\alpha p_2+\frac{1}{4}\alpha p_4 & \frac{1}{4}\alpha p_2+\frac{1}{4}\alpha p_4-\frac{2}{5} \\
\end{smallmatrix}\right).$$
We calculate the Lorentz invariant of $Q_p$
\begin{align}
I_{Q_p}=&\Tr(Q_p^T\epsilon_3Q_p\epsilon_3) \\
=&\frac{2}{(\alpha-1)^2}\big[\alpha\sum_{i=1}^5p_i|\tilde{z_i}^T\epsilon_3\tilde{z_6}|^2-\alpha^2\sum_{1\leq i<j\leq5}p_ip_j|\tilde{z_i}^T\epsilon_3\tilde{z_j}|^2\big] \\
=&\frac{2}{(\alpha-1)^2}[-\alpha^2(\frac{1}{10}p_1p_3+\frac{1}{20}p_1p_4+\frac{4}{25}p_1p_5+\frac{1}{2}p_2p_3+\frac{1}{20}p_2p_5+\frac{2}{5}p_3p_5+\frac{1}{20}p_4p_5) \nonumber\\
&+\alpha(\frac{8}{25}p_1+\frac{9}{40}p_2+\frac{1}{20}p_3)],
\end{align}
where $\alpha=\frac{1}{2p_1}+\frac{2}{5p_2}+\frac{1}{5p_3}+\frac{2}{5p_4}+\frac{1}{2p_5}$ and $\tilde{z_i}$ is the normalization of ${z_i}$ in \eqref{eq:z_1}-\eqref{eq:z_6}. Let
\begin{eqnarray} 
p_1\in(0,\frac{2}{5}),\quad p_5=\frac{2}{5}-p_1,\quad p_2=p_3=p_4=\frac{1}{5},
\end{eqnarray}
then we have
\begin{eqnarray}
I(p_1)=I_{Q_p}=\frac{25p_1p_5+1}{100(20p_1p_5+1)^2}(670p_1^2p_5+14p_1-800p_1^2p_5^2-277p_1p_5-12).
\end{eqnarray}
One can show that
\begin{eqnarray}
I(0.35651164020026005)=I(0.3943325092488642)=-0.063.
\end{eqnarray}
Let $p_1=0.35651164020026005$, by Appendix E in \cite{Lorentz_invariant}, we obtain four bipartite product vectors in $\mathcal{R}(Q_p)$ through numerical computation:
\begin{eqnarray}
e_i=\begin{pmatrix}
    a_i \\
    1
\end{pmatrix}\otimes\psi_i,\ i=1,2,3,4,
\end{eqnarray}
where $a_1=-8.843514882184724,\ a_2=0.19237140417058013,\ a_3=1,\ a_4=0$. By Lemma \ref{le:four vectors}, there are invertible matrices $W$ and diagonal $D$ such that
\begin{eqnarray}
W\begin{pmatrix}
    a_1 & a_2 & a_3 & a_4 \\
    1 & 1 & 1 & 1
\end{pmatrix}D=\begin{pmatrix}
    1 & 0 & 1 & t \\
    0 & 1 & 1 & 1
\end{pmatrix}.
\end{eqnarray}
Then we obtain the characteristic set of $Q_p$ as \eqref{eq:set of t}, where
\begin{eqnarray}
t=\frac{(a_1-a_3)(a_2-a_4)}{(a_1-a_4)(a_2-a_3)}=-0.2651270982388964.
\end{eqnarray}
Similarly let $p_1=0.3943325092488642$, we obtain $t'=-0.036855353393877174$. One can show that the characteristic sets of these two states are not the same. Thus, by (ii) of Lemma \ref{le:SLOCC-equivalent criterion}, the two three-qubit rank-four PPTESs are not SLOCC-equivalent.
\end{example}

In any case, we find a method to determine whether a three-qubit rank-four PPTES can be constructed by UPB. See the following theorem.
\begin{theorem}
\label{th:characterization of UPB-states}
We consider a three-qubit rank-four PPTES $\rho$ in \textbf{type I} with four product vectors
\begin{eqnarray}
\label{eq:four product vectors}
\xi_j=a_{1j}\otimes a_{2j}\otimes a_{3j},\quad j=1,2,3,4
\end{eqnarray}
in $\ker\rho$, which are in general position. By (i) of Lemma \ref{le:four vectors}, there are invertible matrices $W_k$ and diagonal $D_k$ for $k=1,2,3$, such that
\begin{eqnarray}
W_k(a_{k1},a_{k2},a_{k3},a_{k4})D_k=\begin{pmatrix}
    1 & 0 & 1 & t_k \\
    0 & 1 & 1 & 1
\end{pmatrix},\quad k=1,2,3,
\end{eqnarray}
where
\begin{eqnarray}
\label{eq:t_k}
t_k=\frac{\det(a_{k1},a_{k3})\det(a_{k2},a_{k4})}{\det(a_{k1},a_{k4})\det(a_{k2},a_{k3})}\neq0,1.
\end{eqnarray}
Let $\mathcal{A},\mathcal{B},\mathcal{C}$ be the sets \eqref{eq:A,B,C}. Then $\rho$ can be constructed by UPB up to SLOCC-equivalence if and only if $t_1,t_2,t_3\in\mathcal{A}\cup\mathcal{B}\cup\mathcal{C}$ and any two of them are not in the same set among $\mathcal{A},\mathcal{B},\mathcal{C}$.
\end{theorem}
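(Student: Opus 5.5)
The plan is to reduce the statement to a purely combinatorial fact about orthogonality of the local factors of the kernel product vectors, and then use (iii) of Lemma \ref{le:four vectors} to translate orthogonality patterns into the position of $t_k$ in $\mathcal{A},\mathcal{B},\mathcal{C}$.

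\emph{First, the reduction.} Suppose $\rho'=V\rho V^\dagger$ with $V=V_1\otimes V_2\otimes V_3$ invertible. Then $\ker\rho'=(V^\dagger)^{-1}\ker\rho$. So the kernel product vectors transform by the invertible product map $\tilde V=(V_1^\dagger)^{-1}\otimes(V_2^\dagger)^{-1}\otimes(V_3^\dagger)^{-1}$. By (iv) of Proposition \ref{pro:summary of properties} and Theorem \ref{th:GP}, the $\xi_j$ are the only product vectors in $\ker\rho$, up to scalars.

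The UPB state $\frac14(I_8-\sum\xi_i\xi_i^\dagger)$ has kernel equal to the span of the UPB, and its range is the orthogonal complement of that span. By (v) of Proposition \ref{pro:summary of properties}, a rank-four PPTES is determined, up to normalization, by its range, hence by its kernel. Therefore $\rho$ is UPB-constructible up to SLOCC if and only if some invertible product map $\tilde V$ sends the $\xi_j$, after rescaling, to a three-qubit UPB.

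\emph{Second, characterizing three-qubit UPBs of four vectors in general position.} Four mutually orthogonal product vectors give six orthogonal pairs, and each pair must be orthogonal on at least one party. In a qubit, four pairwise linearly independent vectors admit at most two disjoint orthogonal pairs, since each vector has a unique orthogonal direction. So each party accounts for at most two of the pairs. Covering all six pairs therefore forces each party to contribute exactly two disjoint orthogonal pairs, i.e. a perfect matching of $\{1,2,3,4\}$. The three parties must realize the three distinct perfect matchings $\{12|34\},\{14|23\},\{13|24\}$. Note that any genuine UPB in $\ker\rho$ here is automatically in general position, which is consistent with the hypothesis.

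Conversely, suppose the local factors of four normalized product vectors realize the three distinct matchings. Then the vectors are mutually orthogonal. They are also unextendible: a product vector $x\otimes y\otimes z$ orthogonal to all of them can have $x$ orthogonal to at most one $a_{1j}$ by pairwise independence, and similarly for $y$ and $z$. This covers at most $3<4$ of the vectors, a contradiction.

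\emph{Third, matching with the $t_k$.} By (iii) of Lemma \ref{le:four vectors}, the local factors $a_{k1},\dots,a_{k4}$ can be brought by some $V_k$ into two orthogonal pairs exactly when $t_k\in\mathcal{A}\cup\mathcal{B}\cup\mathcal{C}$. The three sets correspond respectively to the matchings $\{12|34\}$, $\{14|23\}$, $\{13|24\}$, and the matching is forced by $t_k$. So the condition that the three parties realize distinct matchings is exactly the condition that $t_1,t_2,t_3$ lie in pairwise different sets among $\mathcal{A},\mathcal{B},\mathcal{C}$.

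For sufficiency, choose such $V_k$, set $V=\bigotimes_k (V_k^{-1})^\dagger$ so that $\tilde V=\bigotimes V_k$, normalize the images, and invoke the second step together with (v) of Proposition \ref{pro:summary of properties}. For necessity, read the argument backwards. The $t_k$ are invariant under invertible local maps and rescaling, being cross-ratios, so the $t_k$ of $\rho$ equal those of the UPB, which lie in distinct sets.

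\emph{Main obstacle.} The delicate step is the exact correspondence between $\mathcal{A},\mathcal{B},\mathcal{C}$ and the three matchings, including the ``only if'' direction of Lemma \ref{le:four vectors}(iii). I would treat that lemma as given, but check the correspondence directly on the standard form $(1,0,1,t;\,0,1,1,1)$ under the standard inner product.
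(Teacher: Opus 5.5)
Your proposal is correct and follows the same route as the paper. Both reduce SLOCC-equivalence to an invertible local map carrying the kernel product vectors $\xi_j$ onto a UPB, then use Lemma~\ref{le:four vectors}(iii) to turn each party's orthogonality pattern into membership of $t_k$ in $\mathcal{A}$, $\mathcal{B}$ or $\mathcal{C}$.

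Your version is more complete than the paper's. The paper takes the normal form of three-qubit UPBs from Example~\ref{ex:another representation for UPB states} for granted and leaves the converse implicit. Your perfect-matching count, the unextendibility check, and the use of (iv)--(v) of Proposition~\ref{pro:summary of properties} make both directions explicit. The appeal to Theorem~\ref{th:GP} is superfluous: (iv) alone already shows the $\xi_j$ are the only product vectors in $\ker\rho$.
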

\begin{proof}
Suppose $\rho$ is SLOCC-equivalent to a PPTES constructed by UPB. Equivalently, there is an invertible product transformation $V=V_1\otimes V_2\otimes V_3$ such that $\{V\xi_j:j=1,2,3,4\}$ is the UPB. Without loss of generality, let
\begin{align}
V_1(a_{11},a_{12},a_{13},a_{14})&\rightarrow\begin{pmatrix}
    1 & 0 & \cos\theta_1 & -\sin\theta_1 \\
    0 & 1 & \sin\theta_1 & \cos\theta_1
\end{pmatrix}, \\
V_2(a_{21},a_{22},a_{23},a_{24})&\rightarrow\begin{pmatrix}
    1 & \cos\theta_2 & 0 & -\sin\theta_2 \\
    0 & \sin\theta_2 & 1 & \cos\theta_2
\end{pmatrix}, \\
V_3(a_{31},a_{32},a_{33},a_{34})&\rightarrow\begin{pmatrix}
    1 & \cos\theta_3 & -\sin\theta_3 & 0 \\
    0 & \sin\theta_3 & \cos\theta_3 & 1
\end{pmatrix},
\end{align}
up to scalar multiplication. By (iii) of Lemma \ref{le:four vectors}, we have
\begin{eqnarray}
t_1\in\mathcal{A},\quad t_2\in\mathcal{C},\quad t_3\in\mathcal{B}.
\end{eqnarray}
Thus, changing the orthogonality relations in the UPB, we obtain the conditions in the theorem.
\end{proof}

\subsection{Charaterization of PPTES of type II}
\label{sec:type II}

This subsection characterizes the three-qubit rank-four PPTES of \textbf{type II}. This is present in Theorem \ref{th:PPTES of type II}. It turns out that this type of PPTES is related to the two-qubit Bell basis. We begin with the following example.

\begin{example}
\label{ex:PPTES by Bell states}
Let
\begin{align}
(a_1,a_2,a_3,a_4)&=\begin{pmatrix}
    1 & 0 & 1 & t \\
    0 & 1 & 1 & 1
\end{pmatrix}, \\
(\phi_1,\phi_2,\phi_3,\phi_4)&=\begin{pmatrix}
    1 & 1 & 0 & 0 \\
    0 & 0 & 1 & 1 \\
    0 & 0 & 1 & -1 \\
    1 & -1 & 0 & 0
\end{pmatrix},
\end{align}
where $t\neq0,1$. Suppose
\begin{eqnarray}
\rho=\sum_{i=1}^4\lambda_ie_ie_i^\dagger,
\end{eqnarray}
where $\lambda_i>0$ and $e_i=a_i\otimes\phi_i$. These four product vectors are in general position in $\bbC^2\otimes\bbC^4$. According to Theorem \ref{th:GP}, the span of $\{e_i\}$ has no more bipartite product vectors in $\bbC^2\otimes\bbC^4$. Thus, $\rho$ has a unique decomposition into pure bipartite product states. Since $\{e_i\}$ are not tripartite product vectors, we obtain that $\rho$ is a three-qubit rank-four entangled state. Notice that $\rho^{T_1}$ is positive semidefinite and $\rho^{T_2}=\rho^{T_3}$. So $\rho$ is a PPTES if and only if
\begin{eqnarray}
\label{eq:rhoT2>=0}    
\rho^{T_2}\geq0.
\end{eqnarray}
Suppose $\lambda_{1,2,3}>0$ and $\lambda_4=1$. We have
\begin{eqnarray}
\rho^{T_2}=\left(\begin{smallmatrix}
    \lambda_1 & 0 & 0 & \lambda_3-|t|^2 & 0 & 0 & 0 & \lambda_3-t \\
    0 & \lambda_3+|t|^2 & \lambda_1 & 0 & 0 & \lambda_3+t & 0 & 0 \\
    0 & \lambda_1 & \lambda_3+|t|^2 & 0 & 0 & 0 & \lambda_3+t & 0 \\
    \lambda_3-|t|^2 & 0 & 0 & \lambda_1 & \lambda_3-t & 0 & 0 & 0 \\
    0 & 0 & 0 & \lambda_3-\bar{t} & \lambda_2 & 0 & 0 & \lambda_3-1 \\
    0 & \lambda_3+\bar{t} & 0 & 0 & 0 & \lambda_3+1 & -\lambda_2 & 0 \\
    0 & 0 & \lambda_3+\bar{t} & 0 & 0 & -\lambda_2 & \lambda_3+1 & 0 \\
    \lambda_3-\bar{t} & 0 & 0 & 0 & \lambda_3-1 & 0 & 0 & \lambda_2
\end{smallmatrix}\right).
\end{eqnarray}
After conjugate congruent transformations, the calculation process of which is shown in Appendix \ref{app:conjugate congruent transformations}, we have $V\rho^{T_2}V^\dagger=\text{diag}(\lambda_1,\lambda_3+|t|^2,\lambda_2,\lambda_3+1)\oplus P\oplus Q$ for some unitary matrix $V$, where
\begin{eqnarray}
\label{eq:P}
P&=[p_{ij}]&=\begin{pmatrix}
    \lambda_3+|t|^2-\frac{\lambda_1^2}{\lambda_3+|t|^2}-\frac{(\lambda_3+t)(\lambda_3+\bar{t})}{\lambda_3+1} & \frac{\lambda_2(\lambda_3+t)}{\lambda_3+1}-\frac{\lambda_1(\lambda_3+t)}{\lambda_3+|t|^2}\\
    \frac{\lambda_2(\lambda_3+\bar{t})}{\lambda_3+1}-\frac{\lambda_1(\lambda_3+\bar{t})}{\lambda_3+|t|^2} & \frac{\lambda_3(1-t)(1-\bar{t})}{\lambda_3+|t|^2}-\frac{\lambda_2^2}{\lambda_3+1}
\end{pmatrix} \\
\label{eq:Q}
Q&=[q_{ij}]&=\begin{pmatrix}
    \lambda_1-\frac{(\lambda_3-|t|^2)^2}{\lambda_1}-\frac{(\lambda_3-t)(\lambda_3-\bar{t})}{\lambda_2} & \frac{(\lambda_3-t)(1-\lambda_3)}{\lambda_2}-\frac{(\lambda_3-|t|^2)(\lambda_3-t)}{\lambda_1} \\
    \frac{(\lambda_3-\bar{t})(1-\lambda_3)}{\lambda_2}-\frac{(\lambda_3-|t|^2)(\lambda_3-\bar{t})}{\lambda_1} & \lambda_2-\frac{(\lambda_3-t)(\lambda_3-\bar{t})}{\lambda_1}-\frac{(\lambda_3-1)^2}{\lambda_2}
\end{pmatrix}.
\end{eqnarray}
Since $\rank\rho^{T_2}=4$ by (i) of Proposition \ref{pro:summary of properties}, $P$ and $Q$ are zero matrices, then
\begin{equation}
\label{eq:zero matrix}
p_{11}=p_{12}=p_{22}=q_{11}=q_{12}=q_{22}=0.
\end{equation}
If $\lambda_3+t=0$, then $t<0$. From $p_{11}=p_{22}=0$ we have $\lambda_1=t^2-t,\lambda_2=1-t$. Thus solving \eqref{eq:zero matrix} yields that
\begin{eqnarray}
A:=\{(\lambda_1,\lambda_2,\lambda_3,t)\in\bbR_+^3\times\bbC:\lambda_1=t^2-t,\lambda_2=1-t,\lambda_3=-t,t<0\}.
\end{eqnarray}
If $\lambda_3-t=0$, then $t>0$. From $q_{11}=q_{22}=0$ we have $\lambda_1=|t^2-t|,\lambda_2=|t-1|$. Thus solving \eqref{eq:zero matrix} yields that
\begin{eqnarray}
B:=\{(\lambda_1,\lambda_2,\lambda_3,t)\in\bbR_+^3\times\bbC:\lambda_1=|t^2-t|,\lambda_2=|t-1|,\lambda_3=t,t>0,t\neq1\}.
\end{eqnarray}
If $\lambda_3\pm t\neq0$, then $p_{12}=0$ means $\frac{\lambda_2}{\lambda_3+1}=\frac{\lambda_1}{\lambda_3+|t|^2}$. We substitute them into $p_{11}=0$ and obtain
\begin{equation}
\label{eq:lambda123 and t}
\lambda_1\lambda_2=\lambda_3(1-t)(1-\bar{t}).
\end{equation}
From $p_{12}=0$ and $q_{12}=0$, we obtain $\lambda_3=|t|,\lambda_1=\lambda_2|t|$, which, substituted into \eqref{eq:lambda123 and t}, yields $\lambda_2=|t-1|$. Excluding the case that $\lambda_3\pm t=0$, we obtain
\begin{eqnarray}
C:=\{(\lambda_1,\lambda_2,\lambda_3,t)\in\bbR_+^3\times\bbC:\lambda_1=|t^2-t|,\lambda_2=|t-1|,\lambda_3=|t|,t\notin\bbR\}.
\end{eqnarray}
In summary, \eqref{eq:rhoT2>=0} holds if and only if $(\lambda_1,\lambda_2,\lambda_3,t)\in D=A\cup B\cup C$, where
\begin{eqnarray}
D=\{(\lambda_1,\lambda_2,\lambda_3,t)\in\bbR_+^3\times\bbC:\lambda_1=|t^2-t|,\lambda_2=|t-1|,\lambda_3=|t|,t\neq0,1\}.
\end{eqnarray}
We write $\rho$ without normalization in the following form,
\begin{align}
\rho=&|t^2-t|(a_1a_1^\dagger)\otimes(\phi_1\phi_1^\dagger)+|t-1|(a_2a_2^\dagger)\otimes(\phi_2\phi_2^\dagger)+|t|(a_3a_3^\dagger)\otimes(\phi_3\phi_3^\dagger) \nonumber\\
&+(a_4a_4^\dagger)\otimes(\phi_4\phi_4^\dagger) \nonumber\\
\label{eq:PPTES by Bell states}
=&\begin{pmatrix}
    |t^2-t| & 0 & 0 & |t^2-t| & 0 & 0 & 0 & 0 \\
    0 & |t|+|t|^2 & |t|-|t|^2 & 0 & 0 & |t|+t & |t|-t & 0 \\
    0 & |t|-|t|^2 & |t|+|t|^2 & 0 & 0 & |t|-t & |t|+t & 0 \\
    |t^2-t| & 0 & 0 & |t^2-t| & 0 & 0 & 0 & 0 \\
    0 & 0 & 0 & 0 & |t-1| & 0 & 0 & -|t-1| \\
    0 & |t|+\bar{t} & |t|-\bar{t} & 0 & 0 & |t|+1 & |t|-1 & 0 \\
    0 & |t|-\bar{t} & |t|+\bar{t} & 0 & 0 & |t|-1 & |t|+1 & 0 \\
    0 & 0 & 0 & 0 & -|t-1| & 0 & 0 & |t-1|
\end{pmatrix},
\end{align}
where $t\neq0,1$ is a complex number. Since the kernel of $\rho$ is the orthogonal complement of the subspace generated by $e_i$'s, we obtain $\ker\rho$ as the following four-dimensional linear subspace in $\bbC^8$,
\begin{align}
\label{eq:kernel of rho}
\{ &(a,-\frac{(|t|^2+t)c+(|t|^2-t)d}{2|t|^2},-\frac{(|t|^2-t)c+(|t|^2+t)d}{2|t|^2},-a,b,c,d,b): \notag \\
&(a,b,c,d)\in\bbC^4 \}.
\end{align}
Assume $x=(x_1,x_2,x_3,x_4,x_5,x_6,x_7,x_8)$ in \eqref{eq:kernel of rho} is a product vector in $\bbC^2\otimes\bbC^2\otimes\bbC^2$, then by $x_2x_7=x_3x_6$ we have $c=\pm d$ since $|t|^2-t\neq0$. By $x_1x_8=x_4x_5$ we have $a=0$ or $b=0$. If $b=0$, by $x_5x_8=x_6x_7$ and $c=\pm d$ we have $c=d=0$, then $a=0$ and $x$ is a zero vector. If $a=0$, by $x_1x_4=x_2x_3$ and $c=\pm d$ we have $c=d=0$, then $b=0$ and $x$ is a zero vector. Thus, the kernel of $\rho$ contains no product vector in $\bbC^2\otimes\bbC^2\otimes\bbC^2$.
\qed
\end{example}

\begin{theorem}
\label{th:PPTES of type II}
All three-qubit rank-four PPTESs with zero Lorentz invariant, namely \textbf{type II} in Definition \ref{de:type I and type II}, can be represented as \eqref{eq:PPTES by Bell states} and its SLOCC-equivalence class.
\end{theorem}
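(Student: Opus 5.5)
Let $\rho$ be a three-qubit rank-four PPTES of type II. By (iii) of Proposition \ref{pro:summary of properties}, up to a system switch (itself an SLOCC operation) we write $\rho=\sum_{i=1}^4\ket{\alpha_i}\bra{\alpha_i}\otimes\ket{\psi_i}\bra{\psi_i}$ with the $\alpha_i$ pairwise linearly independent and the $\psi_i$ linearly independent and entangled. The plan is to bring the two factors of each term to the canonical forms of Example \ref{ex:PPTES by Bell states} separately, and then show that the weights are forced by the PPT condition.

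\textbf{Step 1: normalize the two-qubit factors.} Since $I_\rho=0$, Theorem \ref{th:type II and Bell states} gives invertible $W=P\otimes Q$ and diagonal invertible $D=\mathrm{diag}(d_1,\dots,d_4)$ with $W(\psi_1,\dots,\psi_4)D=(\phi_1,\dots,\phi_4)$, the unnormalized Bell vectors of Example \ref{ex:PPTES by Bell states}. Apply $I\otimes P\otimes Q$ to $\rho$; this is an SLOCC operation. It turns $\rho$ into $\sum_i|d_i|^{-2}\ket{\alpha_i}\bra{\alpha_i}\otimes\ket{\phi_i}\bra{\phi_i}$.

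\textbf{Step 2: normalize the first-qubit factors.} By (i) of Lemma \ref{le:four vectors}, there are invertible $W_1$ and diagonal $D_1$ with $W_1(\alpha_1,\dots,\alpha_4)D_1=(a_1,\dots,a_4)$, where $a_4=(t,1)^T$ and $t\ne0,1$. Applying $W_1\otimes I\otimes I$ and absorbing all scalars into the weights, we obtain
\[
\rho'=\sum_{i=1}^4\lambda_i\, a_ia_i^\dagger\otimes\phi_i\phi_i^\dagger,\qquad \lambda_i>0 .
\]
SLOCC preserves rank, PPT and entanglement, so $\rho'$ is again a rank-four PPTES. Rescaling globally, we may take $\lambda_4=1$. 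The Bell labels attached to the $a_i$ are fixed by the original pairing, and Step 2 fixes the order of the $\alpha_i$ only through the choice of $t$. If needed, Lemma \ref{le:permutations of Bell states} lets us permute the Bell vectors by a local unitary on $A_2A_3$ while keeping the tensor pairing. This shows we lose nothing by assuming the pairing $a_i\leftrightarrow\phi_i$ exactly as in Example \ref{ex:PPTES by Bell states}.

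\textbf{Step 3: the weights are forced.} We now have exactly the setting of Example \ref{ex:PPTES by Bell states}. There, by (i) of Proposition \ref{pro:summary of properties}, $\rank\rho'^{T_2}=4$. The block decomposition $V\rho'^{T_2}V^\dagger=\mathrm{diag}(\cdots)\oplus P\oplus Q$ then forces $P=Q=0$. The case analysis $A\cup B\cup C=D$ yields $\lambda_1=|t^2-t|$, $\lambda_2=|t-1|$ and $\lambda_3=|t|$. Hence $\rho'$ equals \eqref{eq:PPTES by Bell states} up to normalization, so $\rho$ lies in its SLOCC class. For the converse, Example \ref{ex:PPTES by Bell states} shows that \eqref{eq:PPTES by Bell states} is a rank-four PPTES. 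Its $\psi_i$ are Bell states, so by Theorem \ref{th:type II and Bell states} its Lorentz invariant vanishes, and zero invariance is preserved under SLOCC by Definition \ref{de:Lorentz invariant}.

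\textbf{Main obstacle.} The hard point is the compatibility in Step 2. Lemma \ref{le:four vectors} fixes the first-qubit vectors only after a possible reordering, so the pairing of $a_i$ with $\phi_{\sigma(i)}$ must be realigned. We need a product unitary $P'\otimes Q'$ that permutes the $\phi_i$ up to phases, as in Lemma \ref{le:permutations of Bell states}, so that the pairing matches the example. Since every $\sigma\in S_4$ is realizable, we first relabel the indices so that the $\alpha$-ordering used in (i) of Lemma \ref{le:four vectors} is the identity. We then apply the Bell permutation to restore $\phi_i$ in slot $i$; the phases are harmless inside $\phi_i\phi_i^\dagger$.
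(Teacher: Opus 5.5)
Your proof is correct. Its first two steps are the paper's: Lemma~\ref{le:four vectors}(i) puts the $\alpha_i$ in standard form, and Theorem~\ref{th:type II and Bell states} sends the $\psi_i$ to the Bell vectors $\phi_i$. The routes differ at the last step.

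\textbf{The paper's closing step.} From $a_i\otimes\phi_i=d_i(V\otimes P\otimes Q)(\alpha_i\otimes\psi_i)$ the paper invokes Lemma~\ref{le:SLOCC-equivalent criterion}(i). The converse half of that lemma rests on (v) of Proposition~\ref{pro:summary of properties}: a rank-four PPTES is determined by its range. So the transformed state must equal \eqref{eq:PPTES by Bell states} up to normalization, and the weights never have to be computed. Of Example~\ref{ex:PPTES by Bell states} the paper needs only the ``if'' half, namely that the weights in $D$ give a PPT state, so that the target is known to be a PPTES.

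\textbf{Your closing step.} You keep the $\lambda_i$ as unknowns and use the ``only if'' half of Example~\ref{ex:PPTES by Bell states}. Since $\rho'$ is a rank-four PPTES, $\operatorname{rank}\rho'^{T_2}=4$. The diagonal block $\mathrm{diag}(\lambda_1,\lambda_3+|t|^2,\lambda_2,\lambda_3+1)$ already has rank four, so the congruence reduction forces $P=Q=0$. The case analysis then pins down $\lambda_1=|t^2-t|$, $\lambda_2=|t-1|$ and $\lambda_3=|t|$; I checked that it does determine the weights uniquely.

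\textbf{What each route buys.} Yours bypasses (v) and Lemma~\ref{le:SLOCC-equivalent criterion}, but depends on the full Schur-complement computation. In effect it reproves uniqueness-of-range for this family and shows where the specific weights come from. You also state the converse inclusion (SLOCC images of \eqref{eq:PPTES by Bell states} have zero invariant), which the paper leaves implicit.

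\textbf{The ``main obstacle'' is not real.} Lemma~\ref{le:four vectors}(i) gives the standard form for the vectors in their given order, with $t$ the cross-ratio in that order. Theorem~\ref{th:type II and Bell states} maps $\psi_i$ to $\phi_i$ index by index. So the pairing $a_i\leftrightarrow\phi_i$ is automatic, and Lemma~\ref{le:permutations of Bell states} is not needed here; the paper uses it only in Corollary~\ref{co:clasify type II}.

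\textbf{A small inaccuracy.} A permutation of the qubits is not an SLOCC operation, so your parenthetical claim is wrong as stated. The result, like the paper's own proof, should be read up to relabelling of the qubits.
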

\begin{proof}
According to Proposition \ref{pro:summary of properties}, we write a three-qubit rank-four PPTES $\rho$ of \textbf{type II} as
\begin{eqnarray}
\rho=\sum_{i=1}^4\ket{\alpha_i}\bra{\alpha_i}\otimes\ket{\psi_i}\bra{\psi_i}.
\end{eqnarray}
By (i) of Lemma \ref{le:four vectors}, there are invertible matrices $V$ and diagonal $\Lambda$ such that
\begin{eqnarray}
V(\alpha_1,\alpha_2,\alpha_3,\alpha_4)\Lambda=(a_1,a_2,a_3,a_4)=\begin{pmatrix}
    1 & 0 & 1 & t \\
    0 & 1 & 1 & 1
\end{pmatrix}.
\end{eqnarray}
By Theorem \ref{th:type II and Bell states}, there are invertible matrices $W=P\otimes Q$ and diagonal $D$ such that \eqref{eq:four vectors to Bell states} holds. Then we have
\begin{eqnarray}
a_i\otimes\phi_i=d_i(V\otimes P\otimes Q)(\alpha_i\otimes\psi_i),\quad d_i\neq0,\quad i=1,2,3,4,
\end{eqnarray}
where $\phi_i$'s are the same as in \eqref{eq:PPTES by Bell states}. Thus, by Lemma \ref{le:SLOCC-equivalent criterion}, $\rho$ is SLOCC-equivalent to \eqref{eq:PPTES by Bell states} for some $t$.
\end{proof}

In the following, we further classify three-qubit rank-four PPTESs of \textbf{type II} in terms of SLOCC equivalence.
\begin{corollary}
\label{co:clasify type II}
Let $\rho(t),\ t\in\bbC\backslash\{0,1\}$ be the three-qubit rank-four PPTES of \textbf{type II} in \eqref{eq:PPTES by Bell states}. Then $\rho(t)$ is SLOCC equivalent to $\rho(t^{'})$ if and only if they have the same characteristic set, i.e., $t'$ belongs to the set \eqref{eq:set of t}.
\end{corollary}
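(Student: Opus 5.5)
Necessity follows directly from (ii) of Lemma \ref{le:SLOCC-equivalent criterion}. For $\rho(t)$ the vectors $a_i$ are already in the standard form \eqref{eq:standard form}, with parameter $t$. Its characteristic set is therefore the set \eqref{eq:set of t} generated by $t$, and likewise for $\rho(t')$. If $\rho(t)$ and $\rho(t')$ are SLOCC-equivalent, these two six-element sets coincide. Since any single element determines the set, this forces $t'\in\{t,1/t,1-t,1/(1-t),1-1/t,t/(t-1)\}$.

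For sufficiency, suppose $t'=t_\sigma$ for some $\sigma\in S_4$. The plan is to build an invertible product map $W=W_1\otimes W_2\otimes W_3$ and verify \eqref{eq:transformaton} of Lemma \ref{le:SLOCC-equivalent criterion}(i).

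\begin{enumerate}
\item Apply Lemma \ref{le:permutations of Bell states} to $\sigma$. This gives unitaries $P,Q\in U(2)$ with $(P\otimes Q)\phi_j=e^{i\beta_j}\phi_{\sigma(j)}$. Here the $\phi_j$ of \eqref{eq:PPTES by Bell states} are $\sqrt2$ times the Bell states, so the scaling is harmless.
\item Apply (ii) of Lemma \ref{le:four vectors}. This gives an invertible $W_\sigma$ and a diagonal $D_\sigma$ with $W_\sigma(a_{\sigma(1)},\dots,a_{\sigma(4)})D_\sigma=(a'_1,\dots,a'_4)$, where $a'_i$ are the standard vectors with parameter $t'=t_\sigma$. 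Equivalently, $W_\sigma a_{\sigma(i)}=c_i a'_i$ with $c_i\neq0$.
\item Combine the two. With $W=W_\sigma\otimes P\otimes Q$,
\[
W(a_{\sigma(i)}\otimes\phi_{\sigma^{-1}\dots})
\]
needs indices matched. Choose the Bell permutation to be the same $\sigma$ as in step 2, so that we apply $P\otimes Q$ taking $\phi_{\sigma(i)}\mapsto\phi_i$ up to phase. This is Lemma \ref{le:permutations of Bell states} applied to $\sigma^{-1}$.
\item Conclude. Then $W(a_{\sigma(i)}\otimes\phi_{\sigma(i)})=c_ie^{i\gamma_i}\,a'_i\otimes\phi_i$, which is exactly \eqref{eq:transformaton}, with $\rho(t')$ playing the role of $P$ and $\rho(t)$ the role of $Q$.
\end{enumerate}

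Both states are genuine three-qubit rank-four PPTES by Example \ref{ex:PPTES by Bell states}, and their decompositions satisfy (iii) of Proposition \ref{pro:summary of properties}. Lemma \ref{le:SLOCC-equivalent criterion}(i) therefore applies and yields SLOCC-equivalence. The specific weights $|t^2-t|,|t-1|,|t|,1$ do not need to be tracked. Lemma \ref{le:SLOCC-equivalent criterion}(i), via (v) of Proposition \ref{pro:summary of properties}, already handles the coefficients: the image state has the same range as $\rho(t')$, hence equals it after normalization.

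The main obstacle is the bookkeeping in step 3. One must ensure that a single permutation acts simultaneously on the $a_i$ and the $\phi_i$, with the index conventions of Lemma \ref{le:four vectors}(ii) and Lemma \ref{le:permutations of Bell states} aligned. Because Lemma \ref{le:permutations of Bell states} covers every permutation, any mismatch is fixed by composing with the appropriate inverse. A secondary point to check is that Lemma \ref{le:four vectors}(ii) is stated for the specific $\alpha$'s. Its proof gives, for each $\sigma$, precisely the value $t_\sigma$ in \eqref{eq:set of t}, so every element of that set is realized by some $\sigma$.
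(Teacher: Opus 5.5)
Your proof is correct and follows the paper's argument. Necessity comes from Lemma~\ref{le:SLOCC-equivalent criterion}(ii); sufficiency combines Lemma~\ref{le:four vectors}(ii) on the first qubit with Lemma~\ref{le:permutations of Bell states} on the last two and then invokes Lemma~\ref{le:SLOCC-equivalent criterion}(i). The paper simply runs the map in the opposite direction, sending $a_i'\otimes\phi_i$ to multiples of $a_{\sigma(i)}\otimes\phi_{\sigma(i)}$. One minor nit: the set \eqref{eq:set of t} need not have six distinct elements (for $t=-1$ it is $\{-1,2,1/2\}$), but this does not affect the argument.
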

\begin{proof}
The necessity of claim follows from (ii) of Lemma \ref{le:SLOCC-equivalent criterion}. Conversely, if $t'$ belongs to the set \eqref{eq:set of t}, let
\begin{eqnarray}
(a_1,a_2,a_3,a_4)=\begin{pmatrix}
    1 & 0 & 1 & t \\
    0 & 1 & 1 & 1
\end{pmatrix},\quad(a_1',a_2',a_3',a_4')=\begin{pmatrix}
    1 & 0 & 1 & t' \\
    0 & 1 & 1 & 1
\end{pmatrix}.
\end{eqnarray}
Then by (ii) of Lemma \ref{le:four vectors}, there is a permutation $\sigma\in S_4$, invertible matrices $W_1$ and diagonal $D$ such that
\begin{eqnarray}
\label{eq:W1}
W_1(a_1',a_2',a_3',a_4')D=(a_{\sigma(1)},a_{\sigma(2)},a_{\sigma(3)},a_{\sigma(4)}).
\end{eqnarray}
By Lemma \ref{le:permutations of Bell states} and \eqref{eq:W1}, there are unitary matrices $W_2$ and $W_3$ such that
\begin{eqnarray}
(W_1\otimes W_2\otimes W_3)(a_i'\otimes\phi_i)=d_i(a_{\sigma(i)}\otimes\phi_{\sigma(i)}),\quad d_i\neq0,\quad i=1,2,3,4.
\end{eqnarray}
Thus, by (i) of Lemma \ref{le:SLOCC-equivalent criterion}, $\rho(t)$ and $\rho(t')$ are SLOCC-equivalent.
\end{proof}

We comment on the connection between our results and \cite{Lorentz_invariant}, in which authors have constructed a family of three-qubit rank-four PPTESs $\rho$ with zero Lorentz invariant as \eqref{eq:PPT with vanishing quadratic invariant}, without proving that the family can represent all such states. Let
\begin{eqnarray}
W_1=\begin{pmatrix}
    1 & 0 \\
    0 & -1
\end{pmatrix},\quad W_2=\begin{pmatrix}
    \frac{\sqrt{t}-1}{\sqrt{1-t}} & \frac{\sqrt{t}-t}{\sqrt{1-t}} \\
    1 & \sqrt{t}
\end{pmatrix},\quad W_3=\begin{pmatrix}
    1 & \sqrt{t} \\
    \frac{1-\sqrt{t}}{\sqrt{1-t}} & \frac{t-\sqrt{t}}{\sqrt{1-t}},
\end{pmatrix}
\end{eqnarray}
and $V=W_1\otimes W_2\otimes W_3$. One can show that $V\rho V^\dagger$ is the same as \eqref{eq:PPTES by Bell states} up to scalar multiplication. According to our proof, their result and its SLOCC-equivalent form can also represent all three-qubit rank-four PPTESs with zero Lorentz invariant.

\subsection{\texorpdfstring{$n$}{n}-qubit states}
\label{sec:multi-qubit}

In this subsection, we analyze the Lorentz invariant of multiqubit low-rank states and obtain the explicit range.

\begin{lemma}
\label{le:zero invariant}
An $n$-qubit state $\rho$ with zero Lorentz invariant has rank not more than $2^{n-1}$. In particular, a three-qubit PPTES with zero Lorentz invariant must have rank four.
\end{lemma}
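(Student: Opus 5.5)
The plan is to rewrite the Lorentz invariant as the trace of a product of two positive semidefinite matrices of equal rank, and then use the fact that such a trace vanishes only when the two ranges are orthogonal.

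\emph{Rewriting the invariant.} The matrix $\epsilon$ is real and orthogonal, and $\epsilon^T=-\epsilon$. Hence $\epsilon_n=\epsilon^{\otimes n}$ is real orthogonal with $\epsilon_n^T=(-1)^n\epsilon_n$. Since $\rho$ is Hermitian, $\rho^T=\rho^*=\bar\rho$. Define the ``spin-flipped'' matrix
\begin{eqnarray}
\tilde\rho:=\epsilon_n\bar\rho\,\epsilon_n^T .
\end{eqnarray}
Because $\bar\rho\geq0$ and $\epsilon_n$ is unitary, $\tilde\rho$ is positive semidefinite and $\rank\tilde\rho=\rank\rho$. Using cyclicity of the trace and $\epsilon_n=(-1)^n\epsilon_n^T$,
\begin{eqnarray}
I_\rho=\Tr(\rho^T\epsilon_n\rho\epsilon_n)=\Tr(\epsilon_n\rho^T\epsilon_n\rho)=(-1)^n\Tr(\epsilon_n\bar\rho\,\epsilon_n^T\rho)=(-1)^n\Tr(\tilde\rho\rho).
\end{eqnarray}
The only care needed here is the sign bookkeeping in this identity.

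\emph{Zero invariant forces orthogonal ranges.} For positive semidefinite $A,B$ we have
\begin{eqnarray}
\Tr(AB)=\Tr\big((A^{1/2}B^{1/2})^\dagger(A^{1/2}B^{1/2})\big)=\|A^{1/2}B^{1/2}\|_F^2 .
\end{eqnarray}
So $\Tr(AB)=0$ exactly when $A^{1/2}B^{1/2}=0$, i.e. $AB=0$, which means $\mathcal{R}(A)\perp\mathcal{R}(B)$. Applying this to $A=\tilde\rho$ and $B=\rho$: if $I_\rho=0$, then $\mathcal{R}(\rho)$ and $\mathcal{R}(\tilde\rho)$ are orthogonal subspaces of $\bbC^{2^n}$. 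Therefore
\begin{eqnarray}
2\rank\rho=\rank\rho+\rank\tilde\rho\leq2^n,
\end{eqnarray}
which gives $\rank\rho\leq2^{n-1}$. As a by-product, for every state $I_\rho$ has sign $(-1)^n$ or vanishes.

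\emph{The three-qubit case.} Take $n=3$. A PPTES with zero Lorentz invariant then has rank at most $4$. By the result quoted in the introduction \cite{low_rank}, every three-qubit PPT state of rank at most three is separable. Hence an entangled PPT state has rank at least $4$, and so its rank is exactly four.
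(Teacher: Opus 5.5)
Your proof is correct and follows essentially the paper's argument, in coordinate-free form. The paper expands $\rho=\sum_{i=1}^k\xi_i\xi_i^\dagger$, observes that all terms $(-1)^n|\xi_i^T\epsilon_n\xi_j|^2$ have the same sign and so each vanishes, and concludes that $\mathrm{span}\{\bar\xi_i\}\perp\mathrm{span}\{\epsilon_n\xi_j\}$; this is your orthogonality of $\mathcal{R}(\rho)$ and $\mathcal{R}(\tilde\rho)$, derived from $\Tr(\tilde\rho\rho)=0$ for positive semidefinite matrices, giving $2k\le 2^n$, and the three-qubit case is finished, as you do, by the separability of PPT states of rank at most three.
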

\begin{proof}
Suppose $\rank\rho=k$, then $\rho$ can be written as
\begin{eqnarray}
\rho=\sum_{i=1}^k\xi_i\xi_i^\dagger,
\end{eqnarray}
where $\xi_i$'s are linearly independent. By \eqref{eq:Lorentz invariant},
\begin{eqnarray}
I_\rho=\Tr(\rho^T\epsilon_n\rho\epsilon_n)=\sum_{1\leq i,j\leq k}(-1)^n|\xi_i^T\epsilon_n\xi_j|^2=0.
\end{eqnarray}
Then we have
\begin{eqnarray}
|\xi_i^T\epsilon_n\xi_j|=0,\quad i,j=1,2,3,4.
\end{eqnarray}
Thus, the $k$ linearly independent vectors $\bar{\xi_i}$'s are in the orthogonal complement of the subspace spanned by the $k$ linearly independent vectors $\epsilon_n\xi_j$'s. Then $\rank\rho=k\leq2^{n-1}$. The last claim follows from the fact that every multipartite PPT state of rank at most three is fully separable \cite{Chen_2013}.
\end{proof}

\begin{lemma}
\label{le:pure state's invariant}
Let $\rho=\xi\xi^\dagger$ be an $n$-qubit pure state, where $\xi$ is normalized. The Loerntz invariant $I_\rho$ is identically equal to zero for odd $n$, and its range is $[0,1]$ for even $n$.
\end{lemma}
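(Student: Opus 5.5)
By \eqref{eq:Lorentz invariant} with $k=1$, the invariant of a pure state is $I_\rho=(-1)^n|\xi^T\epsilon_n\xi|^2$. So the whole problem reduces to studying the bilinear form $B(\xi,\eta)=\xi^T\epsilon_n\eta$ on $(\bbC^2)^{\otimes n}$.

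For odd $n$, transposing the scalar $\xi^T\epsilon_n\xi$ gives
\[
\xi^T\epsilon_n\xi=(\xi^T\epsilon_n\xi)^T=\xi^T\epsilon_n^T\xi=-\xi^T\epsilon_n\xi .
\]
Here we use $\epsilon_n^T=(-1)^n\epsilon_n$, which is already noted in Definition \ref{de:Lorentz invariant}. Hence $\xi^T\epsilon_n\xi=0$ and $I_\rho\equiv 0$ for every $\xi$.

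For even $n$, $\epsilon_n$ is a real orthogonal matrix, since it is a tensor product of copies of the real orthogonal $\epsilon$. It is also symmetric, and $I_\rho=|\xi^T\epsilon_n\xi|^2\ge 0$.

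\textbf{Upper bound.} By Cauchy--Schwarz, $|\xi^T\epsilon_n\xi|=|\langle \bar\xi,\epsilon_n\xi\rangle|\le\|\bar\xi\|\,\|\epsilon_n\xi\|=1$, because $\epsilon_n$ is unitary. Hence $I_\rho\in[0,1]$.

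\textbf{Attaining the interval.} Both endpoints are realised by explicit states.
\begin{itemize}
\item The product state $\xi=\ket{0}^{\otimes n}$ gives $0$, since $\ket{0}^T\epsilon\ket{0}=0$ in each factor.
\item The product of $n/2$ singlets $\frac{1}{\sqrt2}(\ket{01}-\ket{10})$, one on each pair of qubits $(2j-1,2j)$, gives $1$. For $n=2$ one checks directly that $\xi^T(\epsilon\otimes\epsilon)\xi=\pm1$. Since the bilinear form factorises over the pairs, the value for general even $n$ has modulus $1$.
\end{itemize}
The unit sphere in $\bbC^{2^n}$ is path-connected and $I_\rho$ is continuous in $\xi$. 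By the intermediate value theorem every value in $[0,1]$ is attained. Concretely, one can take $\xi(s)=\cos s\,\ket{0}^{\otimes n}+\sin s\,\eta$, where $\eta$ is the singlet product and $s\in[0,\pi/2]$. This $\xi(s)$ is normalized because $\eta\perp\ket{0}^{\otimes n}$, and it interpolates continuously between the two endpoints.

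The only step needing care is the factorisation identity behind the value $1$:
\[
(\xi_1\otimes\xi_2)^T(\epsilon_a\otimes\epsilon_b)(\eta_1\otimes\eta_2)=(\xi_1^T\epsilon_a\eta_1)(\xi_2^T\epsilon_b\eta_2),
\]
together with checking the singlet's value. Both are routine, so I expect no real obstacle.
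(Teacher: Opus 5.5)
Your proof is correct and follows the paper's argument essentially step for step: the transposition argument with $\epsilon_n^T=-\epsilon_n$ for odd $n$, Cauchy--Schwarz for $|\xi^T\epsilon_n\xi|\le 1$, explicit vectors realising $0$ and $1$, and then continuity. The differences are cosmetic: you realise $1$ with a product of singlets rather than the paper's GHZ-type vector $\frac{1}{\sqrt2}(\ket{0\cdots0}+\ket{1\cdots1})$, and $0$ with $\ket{0}^{\otimes n}$ rather than $\alpha\otimes\psi$; your explicit path $\xi(s)$ (along which $I_\rho=\sin^4 s$) makes precise the connectedness step that the paper leaves implicit.
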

\begin{proof}
By \eqref{eq:Lorentz invariant}, we have
\begin{eqnarray}
I_\rho=(-1)^n|\xi^T\epsilon_n\xi|^2.
\end{eqnarray}
If $n$ is odd, $\epsilon_n$ is skew-symmetric. Then
\begin{eqnarray}
\xi^T\epsilon_n\xi=(\xi^T\epsilon_n\xi)^T=\xi^T\epsilon_n^T\xi=-\xi^T\epsilon_n\xi,
\end{eqnarray}
which means that $I_\rho=0$.

On the other hand if $n$ is even, then Cauchy inequality implies that
\begin{eqnarray}
|\xi^T\epsilon_n\xi|\leq\|\xi\|^2=1.
\end{eqnarray}
When $\psi$ is the vector with the only nonzero element $\frac{\sqrt{2}}{2}$ at the first and last locations, one can verify that $|\xi^T\epsilon_n\xi|=1$. In addition, we choose $\xi=\alpha\otimes\psi$ where $\psi\in\bbC^{2^{n-1}}$, since $n-1$ is odd
\begin{eqnarray}
\xi^T\epsilon_n\xi=(\alpha\otimes\psi)^T(\epsilon\otimes\epsilon_{n-1})(\alpha\otimes\psi)=(\alpha^T\epsilon\alpha)(\psi^T\epsilon_{n-1}\psi)=0.
\end{eqnarray}
Thus the range of $I_\rho$ is $[0,1]$ since $I_\rho$ is continuous with respect to $\xi$.
\end{proof}

\begin{lemma}
\label{le:rank-two}
Let $\rho=c_1\xi_1\xi_1^\dagger+c_2\xi_2\xi_2^\dagger$ be an $n$-qubit mixed state of rank two, where $c_1+c_2=1$, $c_1,c_2>0$ and $\xi_1,\xi_2$ are normalized. Let $I_\rho$ be the Lorentz invariant of $\rho$.

(i) For odd $n$, the range of $I_\rho$ is $[-1/2,0]$.

(ii) For even $n$, the range of $I_\rho$ is $[0,1)$.
\end{lemma}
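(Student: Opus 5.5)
The plan is to expand the Lorentz invariant with \eqref{eq:Lorentz invariant} for the two-term decomposition:
\begin{eqnarray}
I_\rho=(-1)^n\big(c_1^2|\xi_1^T\epsilon_n\xi_1|^2+c_2^2|\xi_2^T\epsilon_n\xi_2|^2+2c_1c_2|\xi_1^T\epsilon_n\xi_2|^2\big).
\end{eqnarray}
Write $B(x,y)=x^T\epsilon_ny$. Since $\epsilon_n$ is a real orthogonal (hence unitary) matrix, the Cauchy inequality gives $|B(x,y)|\leq\|x\|\|y\|=1$ for unit vectors. Equality holds iff $\bar{y}\propto\epsilon_nx$.

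For odd $n$, Lemma \ref{le:pure state's invariant} gives $B(\xi_i,\xi_i)=0$. Hence $I_\rho=-2c_1c_2|B(\xi_1,\xi_2)|^2$, which lies in $[-2c_1c_2,0]\subseteq[-1/2,0]$ because $c_1c_2\leq1/4$. To attain $-1/2$, I take $c_1=c_2=1/2$ and $\xi_2=\overline{\epsilon_n\xi_1}$. Then $|B(\xi_1,\xi_2)|=|\xi_1^T\epsilon_n^2\bar\xi_1|=\|\xi_1\|^2=1$, since $\epsilon_n^2=\pm I$. Moreover $\langle\xi_1,\xi_2\rangle=\overline{B(\xi_1,\xi_1)}=0$, so $\xi_1,\xi_2$ are orthonormal and $\rho$ really has rank two. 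To attain $0$, I take $\xi_1=\ket{0\cdots0}$ and $\xi_2=\ket{0\cdots01}$. Here $\epsilon_n\xi_2\propto\ket{1\cdots10}$ is orthogonal to $\bar\xi_1$, so $B(\xi_1,\xi_2)=0$.

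For even $n$, all three terms are nonnegative, so $I_\rho\geq0$, and the same pair $\ket{0\cdots0},\ket{0\cdots01}$ gives $I_\rho=0$ (all three $B$-values vanish). For the upper bound, $I_\rho\leq c_1^2+c_2^2+2c_1c_2=1$. Equality would force $|B(\xi_1,\xi_1)|=|B(\xi_1,\xi_2)|=1$, i.e.\ $\bar\xi_1\propto\epsilon_n\xi_1$ and $\bar\xi_2\propto\epsilon_n\xi_1$. Then $\xi_2\propto\xi_1$, contradicting rank two, so $I_\rho<1$. To show $1$ is approached, I take $\xi_1$ to be the maximizing vector from the proof of Lemma \ref{le:pure state's invariant}, for which $|B(\xi_1,\xi_1)|=1$. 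I then let $\xi_2$ be a normalized small perturbation of $\xi_1$ that is linearly independent of it. By continuity of $B$, all three $B$-values tend to $1$, so $I_\rho\to1$.

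Finally, to fill the intervals I use connectedness. The parameter set $\{(c_1,\xi_1,\xi_2):c_1\in(0,1),\ \xi_i \text{ unit},\ \xi_1,\xi_2 \text{ linearly independent}\}$ is path-connected: linearly dependent pairs form a subset of real codimension at least two in the product of spheres. Since $I_\rho$ is continuous on this set, its image is an interval. Combined with the attained endpoints (and the approached but unattained endpoint $1$ in the even case), this yields $[-1/2,0]$ for odd $n$ and $[0,1)$ for even $n$. The main point requiring care is this connectedness/codimension argument, together with checking that the extremal choice is genuinely rank two (orthogonality of $\xi_1$ and $\overline{\epsilon_n\xi_1}$ for odd $n$).
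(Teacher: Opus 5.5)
Your proposal follows the same route as the paper. You expand $I_\rho$ via \eqref{eq:Lorentz invariant}, bound each $|\xi_i^T\epsilon_n\xi_j|$ by Cauchy's inequality, and exclude $I_\rho=1$ for even $n$ because equality forces $\xi_1\parallel\xi_2$. You then exhibit a state with $I_\rho=0$ and approach $1$ by perturbing the maximizing pure state. Two steps are more careful than the paper, which only asserts them: your rank-two example attaining $-1/2$ (with the orthogonality check $\langle\xi_1,\overline{\epsilon_n\xi_1}\rangle=0$), and your connectedness argument for filling the intervals.

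One caveat applies to both you and the paper. Your zero example $\ket{0\cdots0},\ket{0\cdots01}$ works only for $n\geq2$. For $n=1$ no zero example exists, since $I_\rho=-2c_1c_2|\det(\xi_1,\xi_2)|^2=-2\det\rho<0$ for every rank-two $\rho$. So (i) holds as stated only for odd $n\geq3$, and for $n=1$ the range is $[-1/2,0)$.
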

\begin{proof}
(i) By \eqref{eq:Lorentz invariant} and Lemma \ref{le:pure state's invariant}, we have
\begin{eqnarray}
I_\rho=\Tr(\rho^T\epsilon_n\rho\epsilon_n)=-2c_1c_2|\xi_1^T\epsilon_n\xi_2|^2.
\end{eqnarray}
Then (i) follows from
\begin{eqnarray}
0<c_1c_2\leq\frac{1}{4},\quad0\leq|\xi_1^T\epsilon_n\xi_2|\leq1.
\end{eqnarray}

(ii) By \eqref{eq:Lorentz invariant}, we have
\begin{eqnarray}
I_\rho=c_1^2|\xi_1^T\epsilon_n\xi_1|^2+2c_1c_2|\xi_1^T\epsilon_n\xi_2|^2+c_2^2|\xi_2^T\epsilon_n\xi_2|^2.
\end{eqnarray}
By Cauchy inequality, $|\xi_i^T\epsilon_n\xi_j|=1$ means that $\bar{\xi}_i$ is parallel to $\epsilon_n\xi_j$. Then $I_\rho=1$ means that
\begin{eqnarray}
\label{eq:invariant is equal to 1}
|\xi_1^T\epsilon_n\xi_1|=|\xi_1^T\epsilon_n\xi_2|=|\xi_2^T\epsilon_n\xi_2|=1,
\end{eqnarray}
which contradicts the fact that $\rho$ is of rank two since \eqref{eq:invariant is equal to 1} indicates that $\xi_1$ is parallel to $\xi_2$. Let $\rho=(\alpha\alpha^\dagger)\otimes\rho'$, where $\rho'$ is an $(n-1)$-qubit state of rank two, then we have
\begin{eqnarray}
I_\rho=|\alpha^T\epsilon\alpha|^2\Tr({\rho^\prime}^T\epsilon_{n-1}\rho^\prime\epsilon_{n-1})=0.
\end{eqnarray}
Thus the range of $I_\rho$ is $[0,1)$ as we can make $\rho$ approach some pure state in Lemma \ref{le:pure state's invariant}.
\end{proof}

Let $\rho$ be an arbitrary $n$-qubit state and $I_\rho$ the Lorentz invariant of $\rho$. If $n$ is even, one can prove that the range of $I_\rho$ is $[0,1]$. For odd $n$, we assume that $I_\rho$ attains its minimum when
\begin{eqnarray}
\rho=\frac{1}{m}\sum_{i=1}^m\xi_i\xi_i^\dagger,
\end{eqnarray}
where $\xi_i$'s are normalized lying on the same subspace spanned by $\xi_1$ and $\epsilon_n\bar{\xi}_1$. Besides, let $\xi_i$'s bisect the complex plane, i.e.,
\begin{eqnarray}
|\xi_i^\dagger\xi_j|^2=\cos^2\frac{2k\pi}{m},\quad(i-j)\ mod\ m\in\{k,m-k\},\quad k=1,2,\cdots,m-1.
\end{eqnarray}
Since $\xi_i$ and $\epsilon_n\bar{\xi}_i$ are orthogonal by Lemma \ref{le:pure state's invariant}, we have
\begin{eqnarray}
|\xi_i^\dagger\xi_j|^2+|\xi_i^\dagger\epsilon_n\bar{\xi}_j|^2=1.
\end{eqnarray}
Then by \eqref{eq:Lorentz invariant}, we obtain that
\begin{eqnarray}
\label{eq:minimum}
I_\rho=-\frac{2}{m^2}\sum_{1\leq i<j\leq m}|\xi_i^T\epsilon_n\xi_j|^2=-\frac{2}{m^2}\sum_{k=1}^{[\frac{m}{2}]}m\sin^2\frac{2k\pi}{m}=-\frac{1}{m}\sum_{k=1}^{[\frac{m}{2}]}(1-\cos\frac{4k\pi}{m}),
\end{eqnarray}
where $[x]$ represents the maximal integer not greater than $x$. For each $m>1$, \eqref{eq:minimum} is equal to $-\frac{1}{2}$. Thus, we have the following conjecture.
\begin{conjecture}
Suppose $\rho$ is an arbitrary $n$-qubit state with odd $n$, then the range of Lorentz invariant $I_\rho$ is $[-\frac{1}{2},0]$.
\qed
\end{conjecture}

\section{Conclusion}
\label{sec:con}

Through the study of the family of three-qubit rank-four PPTESs, a clearer understanding of their structures has been established. We have classified the PPTES constructed by UPB into \textbf{type I} by calculating their Lorentz invariants. Besides, we have found an effective method to distinguish whether a state can be constructed by UPB in three-qubit system. The PPTES in \textbf{type II} has been proved to be SLOCC-equivalent to the explicit form related to Bell states. We have further classified these states, and determined their SLOCC-equivalence classes uniquely. We have also investigated the Lorentz invariants of $n$-qubit states and got the range for low-rank states.

Our results can provide examples for experiments related to three-qubit entangled states. They can also help construct the PPTES that cannot be constructed from UPB in theory. Our method may serve as a reference for future research on PPTES, which could be helpful for the study of separability in multipartite systems. A problem arising from this paper is to study the Lorentz invariant of multiqubit states of rank greater than two. Besides, it may be useful if more facts on Lorentz invariant can be introduced to characterize multipartite PPT entangled states.

\section*{ACKNOWLEDGMENTS}

This work is supported by the NNSF of China (Grant No. 12471427).

\appendix

\section{Proof for Lemma 3}
\label{app:proof for lemma 3}
\begin{proof}
(i) Since $ad-bc,cf-de,ah-bg\neq0$, we have
\begin{eqnarray}
\begin{pmatrix}
    d & -c \\
    -b & a
\end{pmatrix}x(\frac{1}{ad-bc},\frac{1}{ad-bc},\frac{1}{de-cf},\frac{1}{ah-bg})=y=\begin{pmatrix}
    1 & 0 & 1 & t_2 \\
    0 & 1 & t_1 & 1
\end{pmatrix},
\end{eqnarray}
where $t_1=\frac{af-be}{de-cf},t_2=\frac{dg-ch}{ah-bg}$. Since $\det(y_i,y_j)\neq0$ for $i\neq j$, we have
\begin{eqnarray}
t:=t_1t_2=\frac{\det(y_1,y_3)\det(y_2,y_4)}{\det(y_1,y_4)\det(y_2,y_3)}=\frac{\det(x_1,x_3)\det(x_2,x_4)}{\det(x_1,x_4)\det(x_2,x_3)}\neq0,1.
\end{eqnarray}
Then after multiplication by
\begin{eqnarray}
\begin{pmatrix}
    t_1 & 0 \\
    0 & 1
\end{pmatrix}
\end{eqnarray}
and normalization we obtain \eqref{eq:standard form}.

(ii) Let
\begin{eqnarray}
(a_1,a_2,a_3,a_4)=\begin{pmatrix}
    1 & 0 & 1 & t \\
    0 & 1 & 1 & 1
\end{pmatrix},
\end{eqnarray}
then we have
\begin{eqnarray}
\label{eq:t_sigma}
t_\sigma=\frac{\det(x_{\sigma(1)},x_{\sigma(3)})\det(x_{\sigma(2)},x_{\sigma(4)})}{\det(x_{\sigma(1)},x_{\sigma(4)})\det(x_{\sigma(2)},x_{\sigma(3)})}=\frac{\det(a_{\sigma(1)},a_{\sigma(3)})\det(a_{\sigma(2)},a_{\sigma(4)})}{\det(a_{\sigma(1)},a_{\sigma(4)})\det(a_{\sigma(2)},a_{\sigma(3)})}.
\end{eqnarray}
We calculate \eqref{eq:t_sigma} for all $\sigma$ and obtain \eqref{eq:set of t}.

(iii) If $t<0$, then we have
\begin{eqnarray}
\text{Diag}(1,\sqrt{-t})\begin{pmatrix}
    1 & 0 & 1 & t \\
    0 & 1 & 1 & 1
\end{pmatrix}=\begin{pmatrix}
    1 & 0 & 1 & t \\
    0 & \sqrt{-t} & \sqrt{-t} & \sqrt{-t}
\end{pmatrix},
\end{eqnarray}
which is what we want. If $t>0$, by (ii), permuting the order of the four vectors, we will obtain a negative number $t_\sigma$ in \eqref{eq:set of t}, then multiplication by $\text{Diag}(1,\sqrt{-t_\sigma})$ on \eqref{eq:permutation and transformation} gives two sets of orthogonal vectors. By calculation, we obtain \eqref{eq:t in A}-\eqref{eq:t in C}. However, if $t\notin\bbR$, each $t_\sigma$ in \eqref{eq:set of t} is not a real number. Then for any $\sigma$ of $\{1,2,3,4\}$, suppose that
\begin{eqnarray}
(z_1,z_2,z_3,z_4)=\begin{pmatrix}
    w_1 & w_2 \\
    w_3 & w_4
\end{pmatrix}\begin{pmatrix}
    1 & 0 & 1 & t_\sigma \\
    0 & 1 & 1 & 1
\end{pmatrix},
\end{eqnarray}
where $z_1\perp z_2$ and $z_3\perp z_4$. Then
\begin{align}
\langle z_1,z_2\rangle&=w_1\bar{w}_2+w_3\bar{w}_4=0, \\
\langle z_3,z_4\rangle&=(|w_1|^2+|w_3|^2)t_\sigma+|w_2|^2+|w_4|^2+w_1\bar{w}_2+w_3\bar{w}_4+\bar{w}_1w_2+\bar{w}_3w_4=0,
\end{align}
which does not hold. This completes the proof.
\end{proof}

\section{Proof for Theorem 16}
\label{app:proof for theorem 16}
\begin{proof}
By \eqref{eq:Lorentz invariant}, we have
\begin{eqnarray}
I_\rho=-\sum_{1\leq i,j\leq4}|\alpha_i^T\epsilon\alpha_j|^2|\psi_i^T(\epsilon\otimes\epsilon)\psi_j|^2.
\end{eqnarray}
Let $\alpha_i=(a_{1i},a_{2i})^T$, then
\begin{eqnarray}
\label{eq:collinearity criterion}
\alpha_i^T\epsilon\alpha_j=a_{1i}a_{2j}-a_{2i}a_{1j}.
\end{eqnarray}
Since any two of $\alpha_i$'s are linearly independent, \eqref{eq:collinearity criterion} is equal to zero only for $i=j$. Thus, $I_\rho=0$ if and only if
\begin{eqnarray}
\label{eq:criterion for type II}
\psi_i^T(\epsilon\otimes\epsilon)\psi_j=0,\quad \forall i\neq j=1,2,3,4.
\end{eqnarray}
Then sufficiency follows from \eqref{eq:A and B under the same transformations}. We prove the necessity. Let the bijection
\begin{eqnarray}
\label{eq:F}
F:(x,y,z,w)^T\to F(x,y,z,w)=\begin{pmatrix}
    x & y \\
    z & w
\end{pmatrix}
\end{eqnarray}
and
\begin{eqnarray}
P\otimes Q=\begin{pmatrix}
    p_{11} & p_{12} \\
    p_{21} & p_{22}
\end{pmatrix}\otimes\begin{pmatrix}
    q_{11} & q_{12} \\
    q_{21} & q_{22}
\end{pmatrix},\quad\psi=\begin{pmatrix}
    x_1 \\
    x_2 \\
    x_3 \\
    x_4
\end{pmatrix}.
\end{eqnarray}
Then we have
\begin{eqnarray}
\label{eq:vectors to matrices}
(P\otimes Q)\psi=\begin{pmatrix}
    p_{11}q_{11}x_1+p_{11}q_{12}x_2+p_{12}q_{11}x_3+p_{12}q_{12}x_4 \\
    p_{11}q_{21}x_1+p_{11}q_{22}x_2+p_{12}q_{21}x_3+p_{12}q_{22}x_4 \\
    p_{21}q_{11}x_1+p_{21}q_{12}x_2+p_{22}q_{11}x_3+p_{22}q_{12}x_4 \\
    p_{21}q_{21}x_1+p_{21}q_{22}x_2+p_{22}q_{21}x_3+p_{22}q_{22}x_4
\end{pmatrix}=F^{-1}(PF(\psi)Q^T).
\end{eqnarray}
Let
\begin{eqnarray}
\label{eq:A_i} 
A_i=F(\psi_i),\quad i=1,2,3,4,
\end{eqnarray}
which are invertible since each $\psi_i$ is entangled. Then \eqref{eq:criterion for type II} means
\begin{eqnarray}
\label{eq:zero trace}
\Tr(A_i^T\epsilon A_j\epsilon)=0,\quad i\neq j,
\end{eqnarray}
and by \eqref{eq:A and B under the same transformations}, \eqref{eq:zero trace} remains zero under transformations $\tilde{A}_i=PA_iQ$ and $\tilde{A}_j=PA_jQ$, where $P,Q$ are invertible, i.e.,
\begin{eqnarray}
\label{eq:invariant trace}
\Tr((PA_iQ)^T\epsilon(PA_jQ)\epsilon)=0,\quad i\neq j.
\end{eqnarray}
There are invertible matrices $P_0$ and $Q_0$ such that $P_0A_1Q_0=I_2$. Let
\begin{eqnarray}
\label{eq:A_i'}
A_i'=P_0A_iQ_0,\quad i=1,2,3,4.
\end{eqnarray}
There is an invertible matrix $P_1$ such that
\begin{eqnarray}
\label{eq:diagonal or Jordan}
PA_2'P^{-1}=\begin{pmatrix}
    s_1 & 0 \\
    0 & s_2
\end{pmatrix}\quad or\quad\begin{pmatrix}
    s_0 & 1 \\
    0 & s_0
\end{pmatrix},
\end{eqnarray}
where $s_0,s_1,s_2\neq0$. Let
\begin{eqnarray}
\label{eq:A_i''}    
A_i''=P_1A_i''P_1^{-1}=\begin{pmatrix}
    a_i & b_i \\
    c_i & d_i
\end{pmatrix},\quad i=1,2,3,4,
\end{eqnarray}
then $A_1''=I_2$. By \eqref{eq:invariant trace}, we have
\begin{eqnarray}
\label{eq:A_1'' and A_i''}
\Tr(A_1''^T\epsilon A_i''\epsilon)=-\Tr(A_i'')=0,\quad i=2,3,4.
\end{eqnarray}
Thus, the Jordan form in \eqref{eq:diagonal or Jordan} is impossible, then $A_2''$ is diagonal with zero trace, we have
\begin{eqnarray}
A_2''=s_1\begin{pmatrix}
    1 & 0 \\
    0 & -1
\end{pmatrix}.
\end{eqnarray}
By \eqref{eq:A_1'' and A_i''} and \eqref{eq:invariant trace}, we have
\begin{align}
a_i+d_i&=0,\quad i=3,4, \\
\Tr(A_2''^T\epsilon A_i''\epsilon)=s_1(a_i-d_i)&=0,\quad i=3,4, \\
\Tr(A_3''^T\epsilon A_4''\epsilon)=b_3c_4+c_3b_4-a_3d_4-d_3a_4&=0.
\end{align}
Solving equations above yields that
\begin{eqnarray}
a_3=a_4=d_3=d_4=0,\quad\frac{c_3}{b_3}+\frac{c_4}{b_4}=0,
\end{eqnarray}
where $b_3,b_4,c_3,c_4\neq0$ since $A_3'',A_4''$ are invertible. Let $s=\frac{c_3}{b_3}$, we can write $A_i''$ as follows,
\begin{eqnarray}
\label{eq:concrete A_i''}
A_1^"=I_2,\quad A_2^"=s_1\begin{pmatrix}
    1 & 0 \\
    0 & -1
\end{pmatrix},\quad A_3^"=b_3\begin{pmatrix}
    0 & 1 \\
    s & 0
\end{pmatrix},\quad A_4^"=b_4\begin{pmatrix}
    0 & 1 \\
    -s & 0
\end{pmatrix}.
\end{eqnarray}
Let $P_2=\text{Diag}(\sqrt{s},1),\ Q_2=\text{Diag}(1,\sqrt{s})$ and
\begin{eqnarray}
\label{eq:A_i'''}
A_i'''=P_2A_i^"Q_2,\quad i=1,2,3,4,
\end{eqnarray}
where $\sqrt{s}$ is the principal square root of $s$. Then we have
\begin{eqnarray}
\label{eq:concrete A_i'''}
A_1'''=\sqrt{s}I_2,\quad A_2'''=s_1\sqrt{s}\begin{pmatrix}
    1 & 0 \\
    0 & -1
\end{pmatrix},\quad A_3'''=b_3s\begin{pmatrix}
    0 & 1 \\
    1 & 0
\end{pmatrix},\quad A_4'''=b_4s\begin{pmatrix}
    0 & 1 \\
    -1 & 0
\end{pmatrix}.
\end{eqnarray}
Let $P=P_2P_1P_0$ and $Q=Q_0P_1^{-1}Q_2$, by \eqref{eq:A_i'}, \eqref{eq:A_i''} and \eqref{eq:A_i'''}, we have
\begin{eqnarray}
\label{eq:PA_iQ}
PA_iQ=A_i''',\quad i=1,2,3,4.
\end{eqnarray}
We apply the mapping $F^{-1}$ to both sides of \eqref{eq:PA_iQ}, then by \eqref{eq:vectors to matrices},
\begin{eqnarray}
(P\otimes Q^T)\psi_i=F^{-1}(PA_iQ)=F^{-1}(A_i''').
\end{eqnarray}
Thus, let $W=P\otimes Q^T$ and $D=\text{Diag}(l_1,l_2,l_3,l_4)$, where
\begin{eqnarray}
l_1=\frac{1}{\sqrt{s}},\quad l_2=\frac{1}{s_1\sqrt{s}},\quad l_3=\frac{1}{b_3s},\quad l_4=\frac{1}{b_4s},
\end{eqnarray}
we obtain \eqref{eq:four vectors to Bell states}.
\end{proof}

\section{Conjugate congruent transformations}
\label{app:conjugate congruent transformations}
$$\rho^{T_2}=\left(\begin{smallmatrix}
    \lambda_1 & 0 & 0 & \lambda_3-|t|^2 & 0 & 0 & 0 & \lambda_3-t \\
    0 & \lambda_3+|t|^2 & \lambda_1 & 0 & 0 & \lambda_3+t & 0 & 0 \\
    0 & \lambda_1 & \lambda_3+|t|^2 & 0 & 0 & 0 & \lambda_3+t & 0 \\
    \lambda_3-|t|^2 & 0 & 0 & \lambda_1 & \lambda_3-t & 0 & 0 & 0 \\
    0 & 0 & 0 & \lambda_3-\bar{t} & \lambda_2 & 0 & 0 & \lambda_3-1 \\
    0 & \lambda_3+\bar{t} & 0 & 0 & 0 & \lambda_3+1 & -\lambda_2 & 0 \\
    0 & 0 & \lambda_3+\bar{t} & 0 & 0 & -\lambda_2 & \lambda_3+1 & 0 \\
    \lambda_3-\bar{t} & 0 & 0 & 0 & \lambda_3-1 & 0 & 0 & \lambda_2
\end{smallmatrix}\right)$$

Column~\textcircled{4}-$\frac{\lambda_3-|t|^2}{\lambda_1}$\textcircled{1} and row~\textcircled{4}-$\frac{\lambda_3-|t|^2}{\lambda_1}$\textcircled{1}:

$$\left(\begin{smallmatrix}
    \lambda_1 & 0 & 0 & 0 & 0 & 0 & 0 & \lambda_3-t \\
    0 & \lambda_3+|t|^2 & \lambda_1 & 0 & 0 & \lambda_3+t & 0 & 0 \\
    0 & \lambda_1 & \lambda_3+|t|^2 & 0 & 0 & 0 & \lambda_3+t & 0 \\
    0 & 0 & 0 & \lambda_1-\frac{(\lambda_3-|t|^2)^2}{\lambda_1} & \lambda_3-t & 0 & 0 & -\frac{(\lambda_3-|t|^2)(\lambda_3-t)}{\lambda_1} \\
    0 & 0 & 0 & \lambda_3-\bar{t} & \lambda_2 & 0 & 0 & \lambda_3-1 \\
    0 & \lambda_3+\bar{t} & 0 & 0 & 0 & \lambda_3+1 & -\lambda_2 & 0 \\
    0 & 0 & \lambda_3+\bar{t} & 0 & 0 & -\lambda_2 & \lambda_3+1 & 0 \\
    \lambda_3-\bar{t} & 0 & 0 & -\frac{(\lambda_3-|t|^2)(\lambda_3-\bar{t})}{\lambda_1} & \lambda_3-1 & 0 & 0 & \lambda_2
\end{smallmatrix}\right)$$

Column~\textcircled{8}-$\frac{\lambda_3-t}{\lambda_1}$\textcircled{1} and row~\textcircled{8}-$\frac{\lambda_3-\bar{t}}{\lambda_1}$\textcircled{1}:

$$\left(\begin{smallmatrix}
    \lambda_1 & 0 & 0 & 0 & 0 & 0 & 0 & 0 \\
    0 & \lambda_3+|t|^2 & \lambda_1 & 0 & 0 & \lambda_3+t & 0 & 0 \\
    0 & \lambda_1 & \lambda_3+|t|^2 & 0 & 0 & 0 & \lambda_3+t & 0 \\
    0 & 0 & 0 & \lambda_1-\frac{(\lambda_3-|t|^2)^2}{\lambda_1} & \lambda_3-t & 0 & 0 & -\frac{(\lambda_3-|t|^2)(\lambda_3-t)}{\lambda_1} \\
    0 & 0 & 0 & \lambda_3-\bar{t} & \lambda_2 & 0 & 0 & \lambda_3-1 \\
    0 & \lambda_3+\bar{t} & 0 & 0 & 0 & \lambda_3+1 & -\lambda_2 & 0 \\
    0 & 0 & \lambda_3+\bar{t} & 0 & 0 & -\lambda_2 & \lambda_3+1 & 0 \\
    0 & 0 & 0 & -\frac{(\lambda_3-|t|^2)(\lambda_3-\bar{t})}{\lambda_1} & \lambda_3-1 & 0 & 0 & \lambda_2-\frac{(\lambda_3-t)(\lambda_3-\bar{t})}{\lambda_1}
\end{smallmatrix}\right)$$

Simplify:

$$\left(\begin{smallmatrix}
    \lambda_3+|t|^2 & \lambda_1 & 0 & 0 & \lambda_3+t & 0 & 0 \\
    \lambda_1 & \lambda_3+|t|^2 & 0 & 0 & 0 & \lambda_3+t & 0 \\
    0 & 0 & \lambda_1-\frac{(\lambda_3-|t|^2)^2}{\lambda_1} & \lambda_3-t & 0 & 0 & -\frac{(\lambda_3-|t|^2)(\lambda_3-t)}{\lambda_1} \\
    0 & 0 & \lambda_3-\bar{t} & \lambda_2 & 0 & 0 & \lambda_3-1 \\
    \lambda_3+\bar{t} & 0 & 0 & 0 & \lambda_3+1 & -\lambda_2 & 0 \\
    0 & \lambda_3+\bar{t} & 0 & 0 & -\lambda_2 & \lambda_3+1 & 0 \\
    0 & 0 & -\frac{(\lambda_3-|t|^2)(\lambda_3-\bar{t})}{\lambda_1} & \lambda_3-1 & 0 & 0 & \lambda_2-\frac{(\lambda_3-t)(\lambda_3-\bar{t})}{\lambda_1}
\end{smallmatrix}\right)$$

Column~\textcircled{2}-$\frac{\lambda_1}{\lambda_3+|t|^2}$\textcircled{1} and row~\textcircled{2}-$\frac{\lambda_1}{\lambda_3+|t|^2}$\textcircled{1}:

$$\left(\begin{smallmatrix}
    \lambda_3+|t|^2 & 0 & 0 & 0 & \lambda_3+t & 0 & 0 \\
    0 & \lambda_3+|t|^2-\frac{\lambda_1^2}{\lambda_3+|t|^2} & 0 & 0 & -\frac{\lambda_1(\lambda_3+t)}{\lambda_3+|t|^2} & \lambda_3+t & 0 \\
    0 & 0 & \lambda_1-\frac{(\lambda_3-|t|^2)^2}{\lambda_1} & \lambda_3-t & 0 & 0 & -\frac{(\lambda_3-|t|^2)(\lambda_3-t)}{\lambda_1} \\
    0 & 0 & \lambda_3-\bar{t} & \lambda_2 & 0 & 0 & \lambda_3-1 \\
    \lambda_3+\bar{t} & -\frac{\lambda_1(\lambda_3+\bar{t})}{\lambda_3+|t|^2} & 0 & 0 & \lambda_3+1 & -\lambda_2 & 0 \\
    0 & \lambda_3+\bar{t} & 0 & 0 & -\lambda_2 & \lambda_3+1 & 0 \\
    0 & 0 & -\frac{(\lambda_3-|t|^2)(\lambda_3-\bar{t})}{\lambda_1} & \lambda_3-1 & 0 & 0 & \lambda_2-\frac{(\lambda_3-t)(\lambda_3-\bar{t})}{\lambda_1}
\end{smallmatrix}\right)$$

Column~\textcircled{5}-$\frac{\lambda_3+t}{\lambda_3+|t|^2}$\textcircled{1} and row~\textcircled{5}-$\frac{\lambda_3+\bar{t}}{\lambda_3+|t|^2}$\textcircled{1}:

$$\left(\begin{smallmatrix}
    \lambda_3+|t|^2 & 0 & 0 & 0 & 0 & 0 & 0 \\
    0 & \lambda_3+|t|^2-\frac{\lambda_1^2}{\lambda_3+|t|^2} & 0 & 0 & -\frac{\lambda_1(\lambda_3+t)}{\lambda_3+|t|^2} & \lambda_3+t & 0 \\
    0 & 0 & \lambda_1-\frac{(\lambda_3-|t|^2)^2}{\lambda_1} & \lambda_3-t & 0 & 0 & -\frac{(\lambda_3-|t|^2)(\lambda_3-t)}{\lambda_1} \\
    0 & 0 & \lambda_3-\bar{t} & \lambda_2 & 0 & 0 & \lambda_3-1 \\
    0 & -\frac{\lambda_1(\lambda_3+\bar{t})}{\lambda_3+|t|^2} & 0 & 0 & \frac{\lambda_3(1-t)(1-\bar{t})}{\lambda_3+|t|^2} & -\lambda_2 & 0 \\
    0 & \lambda_3+\bar{t} & 0 & 0 & -\lambda_2 & \lambda_3+1 & 0 \\
    0 & 0 & -\frac{(\lambda_3-|t|^2)(\lambda_3-\bar{t})}{\lambda_1} & \lambda_3-1 & 0 & 0 & \lambda_2-\frac{(\lambda_3-t)(\lambda_3-\bar{t})}{\lambda_1}
\end{smallmatrix}\right)$$

Simplify:

$$\left(\begin{smallmatrix}
    \lambda_3+|t|^2-\frac{\lambda_1^2}{\lambda_3+|t|^2} & 0 & 0 & -\frac{\lambda_1(\lambda_3+t)}{\lambda_3+|t|^2} & \lambda_3+t & 0 \\
    0 & \lambda_1-\frac{(\lambda_3-|t|^2)^2}{\lambda_1} & \lambda_3-t & 0 & 0 & -\frac{(\lambda_3-|t|^2)(\lambda_3-t)}{\lambda_1} \\
    0 & \lambda_3-\bar{t} & \lambda_2 & 0 & 0 & \lambda_3-1 \\
    -\frac{\lambda_1(\lambda_3+\bar{t})}{\lambda_3+|t|^2} & 0 & 0 & \frac{\lambda_3(1-t)(1-\bar{t})}{\lambda_3+|t|^2} & -\lambda_2 & 0 \\
    \lambda_3+\bar{t} & 0 & 0 & -\lambda_2 & \lambda_3+1 & 0 \\
    0 & -\frac{(\lambda_3-|t|^2)(\lambda_3-\bar{t})}{\lambda_1} & \lambda_3-1 & 0 & 0 & \lambda_2-\frac{(\lambda_3-t)(\lambda_3-\bar{t})}{\lambda_1}
\end{smallmatrix}\right)$$

Swap \textcircled{3} and \textcircled{1}:

$$\left(\begin{smallmatrix}
    \lambda_2 & \lambda_3-\bar{t} & 0 & 0 & 0 & \lambda_3-1 \\
    \lambda_3-t & \lambda_1-\frac{(\lambda_3-|t|^2)^2}{\lambda_1} & 0 & 0 & 0 & -\frac{(\lambda_3-|t|^2)(\lambda_3-t)}{\lambda_1} \\
    0 & 0 & \lambda_3+|t|^2-\frac{\lambda_1^2}{\lambda_3+|t|^2} & -\frac{\lambda_1(\lambda_3+t)}{\lambda_3+|t|^2} & \lambda_3+t & 0 \\
    0 & 0 & -\frac{\lambda_1(\lambda_3+\bar{t})}{\lambda_3+|t|^2} & \frac{\lambda_3(1-t)(1-\bar{t})}{\lambda_3+|t|^2} & -\lambda_2 & 0 \\
    0 & 0 & \lambda_3+\bar{t} & -\lambda_2 & \lambda_3+1 & 0 \\
    \lambda_3-1 & -\frac{(\lambda_3-|t|^2)(\lambda_3-\bar{t})}{\lambda_1} & 0 & 0 & 0 & \lambda_2-\frac{(\lambda_3-t)(\lambda_3-\bar{t})}{\lambda_1}
\end{smallmatrix}\right)$$

Column~\textcircled{2}-$\frac{\lambda_3-\bar{t}}{\lambda_2}$\textcircled{1} and row~\textcircled{2}-$\frac{\lambda_3-t}{\lambda_2}$\textcircled{1}:

$$\left(\begin{smallmatrix}
    \lambda_2 & 0 & 0 & 0 & 0 & \lambda_3-1 \\
    0 & \lambda_1-\frac{(\lambda_3-|t|^2)^2}{\lambda_1}-\frac{(\lambda_3-t)(\lambda_3-\bar{t})}{\lambda_2} & 0 & 0 & 0 & \frac{(\lambda_3-t)(1-\lambda_3)}{\lambda_2}-\frac{(\lambda_3-|t|^2)(\lambda_3-t)}{\lambda_1} \\
    0 & 0 & \lambda_3+|t|^2-\frac{\lambda_1^2}{\lambda_3+|t|^2} & -\frac{\lambda_1(\lambda_3+t)}{\lambda_3+|t|^2} & \lambda_3+t & 0 \\
    0 & 0 & -\frac{\lambda_1(\lambda_3+\bar{t})}{\lambda_3+|t|^2} & \frac{\lambda_3(1-t)(1-\bar{t})}{\lambda_3+|t|^2} & -\lambda_2 & 0 \\
    0 & 0 & \lambda_3+\bar{t} & -\lambda_2 & \lambda_3+1 & 0 \\
    \lambda_3-1 & \frac{(\lambda_3-\bar{t})(1-\lambda_3)}{\lambda_2}-\frac{(\lambda_3-|t|^2)(\lambda_3-\bar{t})}{\lambda_1} & 0 & 0 & 0 & \lambda_2-\frac{(\lambda_3-t)(\lambda_3-\bar{t})}{\lambda_1}
\end{smallmatrix}\right)$$

Column~\textcircled{6}-$\frac{\lambda_3-1}{\lambda_2}$\textcircled{1} and row~\textcircled{6}-$\frac{\lambda_3-1}{\lambda_2}$\textcircled{1}:

$$\left(\begin{smallmatrix}
    \lambda_2 & 0 & 0 & 0 & 0 & 0 \\
    0 & \lambda_1-\frac{(\lambda_3-|t|^2)^2}{\lambda_1}-\frac{(\lambda_3-t)(\lambda_3-\bar{t})}{\lambda_2} & 0 & 0 & 0 & \frac{(\lambda_3-t)(1-\lambda_3)}{\lambda_2}-\frac{(\lambda_3-|t|^2)(\lambda_3-t)}{\lambda_1} \\
    0 & 0 & \lambda_3+|t|^2-\frac{\lambda_1^2}{\lambda_3+|t|^2} & -\frac{\lambda_1(\lambda_3+t)}{\lambda_3+|t|^2} & \lambda_3+t & 0 \\
    0 & 0 & -\frac{\lambda_1(\lambda_3+\bar{t})}{\lambda_3+|t|^2} & \frac{\lambda_3(1-t)(1-\bar{t})}{\lambda_3+|t|^2} & -\lambda_2 & 0 \\
    0 & 0 & \lambda_3+\bar{t} & -\lambda_2 & \lambda_3+1 & 0 \\
    0 & \frac{(\lambda_3-\bar{t})(1-\lambda_3)}{\lambda_2}-\frac{(\lambda_3-|t|^2)(\lambda_3-\bar{t})}{\lambda_1} & 0 & 0 & 0 & \lambda_2-\frac{(\lambda_3-t)(\lambda_3-\bar{t})}{\lambda_1}-\frac{(\lambda_3-1)^2}{\lambda_2}
\end{smallmatrix}\right)$$

Simplify:

$$\left(\begin{smallmatrix}
    \lambda_1-\frac{(\lambda_3-|t|^2)^2}{\lambda_1}-\frac{(\lambda_3-t)(\lambda_3-\bar{t})}{\lambda_2} & 0 & 0 & 0 & \frac{(\lambda_3-t)(1-\lambda_3)}{\lambda_2}-\frac{(\lambda_3-|t|^2)(\lambda_3-t)}{\lambda_1} \\
    0 & \lambda_3+|t|^2-\frac{\lambda_1^2}{\lambda_3+|t|^2} & -\frac{\lambda_1(\lambda_3+t)}{\lambda_3+|t|^2} & \lambda_3+t & 0 \\
    0 & -\frac{\lambda_1(\lambda_3+\bar{t})}{\lambda_3+|t|^2} & \frac{\lambda_3(1-t)(1-\bar{t})}{\lambda_3+|t|^2} & -\lambda_2 & 0 \\
    0 & \lambda_3+\bar{t} & -\lambda_2 & \lambda_3+1 & 0 \\
    \frac{(\lambda_3-\bar{t})(1-\lambda_3)}{\lambda_2}-\frac{(\lambda_3-|t|^2)(\lambda_3-\bar{t})}{\lambda_1} & 0 & 0 & 0 & \lambda_2-\frac{(\lambda_3-t)(\lambda_3-\bar{t})}{\lambda_1}-\frac{(\lambda_3-1)^2}{\lambda_2}
\end{smallmatrix}\right)$$

Swap \textcircled{4} and \textcircled{1}:

$$\left(\begin{smallmatrix}
    \lambda_3+1 & \lambda_3+\bar{t} & -\lambda_2 & 0 & 0 \\
    \lambda_3+t & \lambda_3+|t|^2-\frac{\lambda_1^2}{\lambda_3+|t|^2} & -\frac{\lambda_1(\lambda_3+t)}{\lambda_3+|t|^2} & 0 & 0 \\
    -\lambda_2 & -\frac{\lambda_1(\lambda_3+\bar{t})}{\lambda_3+|t|^2} & \frac{\lambda_3(1-t)(1-\bar{t})}{\lambda_3+|t|^2} & 0 & 0 \\
    0 & 0 & 0 & \lambda_1-\frac{(\lambda_3-|t|^2)^2}{\lambda_1}-\frac{(\lambda_3-t)(\lambda_3-\bar{t})}{\lambda_2} & \frac{(\lambda_3-t)(1-\lambda_3)}{\lambda_2}-\frac{(\lambda_3-|t|^2)(\lambda_3-t)}{\lambda_1} \\
    0 & 0 & 0 & \frac{(\lambda_3-\bar{t})(1-\lambda_3)}{\lambda_2}-\frac{(\lambda_3-|t|^2)(\lambda_3-\bar{t})}{\lambda_1} & \lambda_2-\frac{(\lambda_3-t)(\lambda_3-\bar{t})}{\lambda_1}-\frac{(\lambda_3-1)^2}{\lambda_2}
\end{smallmatrix}\right)$$

Written as $M\oplus\begin{pmatrix}
    \lambda_1-\frac{(\lambda_3-|t|^2)^2}{\lambda_1}-\frac{(\lambda_3-t)(\lambda_3-\bar{t})}{\lambda_2} & \frac{(\lambda_3-t)(1-\lambda_3)}{\lambda_2}-\frac{(\lambda_3-|t|^2)(\lambda_3-t)}{\lambda_1} \\
    \frac{(\lambda_3-\bar{t})(1-\lambda_3)}{\lambda_2}-\frac{(\lambda_3-|t|^2)(\lambda_3-\bar{t})}{\lambda_1} & \lambda_2-\frac{(\lambda_3-t)(\lambda_3-\bar{t})}{\lambda_1}-\frac{(\lambda_3-1)^2}{\lambda_2}
\end{pmatrix}$

$$M=\begin{pmatrix}
    \lambda_3+1 & \lambda_3+\bar{t} & -\lambda_2\\
    \lambda_3+t & \lambda_3+|t|^2-\frac{\lambda_1^2}{\lambda_3+|t|^2} & -\frac{\lambda_1(\lambda_3+t)}{\lambda_3+|t|^2}\\
    -\lambda_2 & -\frac{\lambda_1(\lambda_3+\bar{t})}{\lambda_3+|t|^2} & \frac{\lambda_3(1-t)(1-\bar{t})}{\lambda_3+|t|^2}
\end{pmatrix}$$

Column~\textcircled{2}-$\frac{\lambda_3+\bar{t}}{\lambda_3+1}$\textcircled{1} and row~\textcircled{2}-$\frac{\lambda_3+t}{\lambda_3+1}$\textcircled{1}:

$$\begin{pmatrix}
    \lambda_3+1 & 0 & -\lambda_2\\
    0 & \lambda_3+|t|^2-\frac{\lambda_1^2}{\lambda_3+|t|^2}-\frac{(\lambda_3+t)(\lambda_3+\bar{t})}{\lambda_3+1} & \frac{\lambda_2(\lambda_3+t)}{\lambda_3+1}-\frac{\lambda_1(\lambda_3+t)}{\lambda_3+|t|^2}\\
    -\lambda_2 & \frac{\lambda_2(\lambda_3+\bar{t})}{\lambda_3+1}-\frac{\lambda_1(\lambda_3+\bar{t})}{\lambda_3+|t|^2} & \frac{\lambda_3(1-t)(1-\bar{t})}{\lambda_3+|t|^2}
\end{pmatrix}$$

Column~\textcircled{3}+$\frac{\lambda_2}{\lambda_3+1}$\textcircled{1} and row~\textcircled{3}+$\frac{\lambda_2}{\lambda_3+1}$\textcircled{1}:

$$\begin{pmatrix}
    \lambda_3+1 & 0 & 0\\
    0 & \lambda_3+|t|^2-\frac{\lambda_1^2}{\lambda_3+|t|^2}-\frac{(\lambda_3+t)(\lambda_3+\bar{t})}{\lambda_3+1} & \frac{\lambda_2(\lambda_3+t)}{\lambda_3+1}-\frac{\lambda_1(\lambda_3+t)}{\lambda_3+|t|^2}\\
    0 & \frac{\lambda_2(\lambda_3+\bar{t})}{\lambda_3+1}-\frac{\lambda_1(\lambda_3+\bar{t})}{\lambda_3+|t|^2} & \frac{\lambda_3(1-t)(1-\bar{t})}{\lambda_3+|t|^2}-\frac{\lambda_2^2}{\lambda_3+1}
\end{pmatrix}$$

Thus there exists some unitary matrix $V$, such that $V\rho^{T_2}V^\dagger=diag(\lambda_1,\lambda_3+|t|^2,\lambda_2,\lambda_3+1)\oplus P\oplus Q$, where $P,Q$ are the same as \eqref{eq:P}, \eqref{eq:Q}.

\bibliographystyle{unsrt}
\bibliography{references}

@article{length,
  title = {Entanglement Detection Length of Multipartite Quantum States},
  author = {Shi, Fei and Chen, Lin and Chiribella, Giulio and Zhao, Qi},
  journal = {Phys. Rev. Lett.},
  volume = {134},
  issue = {5},
  pages = {050201},
  numpages = {6},
  year = {2025},
  month = {Feb},
  publisher = {American Physical Society},
  doi = {10.1103/PhysRevLett.134.050201},
  url = {https://link.aps.org/doi/10.1103/PhysRevLett.134.050201}
}

@article{Peres.A,
  title = {Separability Criterion for Density Matrices},
  author = {Peres, Asher},
  journal = {Phys. Rev. Lett.},
  volume = {77},
  issue = {8},
  pages = {1413--1415},
  numpages = {0},
  year = {1996},
  month = {Aug},
  publisher = {American Physical Society},
  doi = {10.1103/PhysRevLett.77.1413},
  url = {https://link.aps.org/doi/10.1103/PhysRevLett.77.1413}
}

@article{Horodecki.MPR,
title = {Separability of mixed states: necessary and sufficient conditions},
journal = {Physics Letters A},
volume = {223},
number = {1},
pages = {1-8},
year = {1996},
issn = {0375-9601},
doi = {https://doi.org/10.1016/S0375-9601(96)00706-2},
url = {https://www.sciencedirect.com/science/article/pii/S0375960196007062},
author = {Michał Horodecki and Paweł Horodecki and Ryszard Horodecki}
}

@article{construction,
title = {Construction of {PPT} entangled state and its detection by using second-order moment of the partial transposition},
journal = {Physics Letters A},
volume = {567},
pages = {131195},
year = {2026},
issn = {0375-9601},
doi = {https://doi.org/10.1016/j.physleta.2025.131195},
url = {https://www.sciencedirect.com/science/article/pii/S0375960125009752},
author = {Rohit Kumar and Satyabrata Adhikari}
}

@article{Horodecki.P,
title = {Separability criterion and inseparable mixed states with positive partial transposition},
journal = {Physics Letters A},
volume = {232},
number = {5},
pages = {333-339},
year = {1997},
issn = {0375-9601},
doi = {https://doi.org/10.1016/S0375-9601(97)00416-7},
url = {https://www.sciencedirect.com/science/article/pii/S0375960197004167},
author = {Pawel Horodecki}
}

@article{multiqubit,
  title = {Bound entanglement in symmetric random induced states},
  author = {Louvet, J. and Damanet, F. and Bastin, T.},
  journal = {Phys. Rev. A},
  volume = {113},
  issue = {4},
  pages = {042440},
  numpages = {9},
  year = {2026},
  month = {Apr},
  publisher = {American Physical Society},
  doi = {10.1103/klk3-wjvn},
  url = {https://link.aps.org/doi/10.1103/klk3-wjvn}
}

@unpublished{multiqutrit,
      title={Entanglement in the {Dicke} subspace}, 
      author={Aabhas Gulati and Ion Nechita and Clément Pellegrini},
      eprint={2602.15800},
      archivePrefix={arXiv},
      primaryClass={quant-ph},
      url={https://arxiv.org/abs/2602.15800},
      note={Preprint: arXiv:2602.15800, 2026}
}

@article{powerful,
  title = {Bound entangled singlet-like states for quantum metrology},
  author = {P\'al, K\'aroly F. and T\'oth, G\'eza and Bene, Erika and V\'ertesi, Tam\'as},
  journal = {Phys. Rev. Res.},
  volume = {3},
  issue = {2},
  pages = {023101},
  numpages = {18},
  year = {2021},
  month = {May},
  publisher = {American Physical Society},
  doi = {10.1103/PhysRevResearch.3.023101},
  url = {https://link.aps.org/doi/10.1103/PhysRevResearch.3.023101}
}

@article{low_rank,
  title = {Separability and entanglement in ${C}^{2}\otimes{C}^{2}\otimes{C}^{N}$ composite quantum systems},
  author = {Karnas, Sini\ifmmode \check{s}\else \v{s}\fi{}a and Lewenstein, Maciej},
  journal = {Phys. Rev. A},
  volume = {64},
  issue = {4},
  pages = {042313},
  numpages = {11},
  year = {2001},
  month = {Sep},
  publisher = {American Physical Society},
  doi = {10.1103/PhysRevA.64.042313},
  url = {https://link.aps.org/doi/10.1103/PhysRevA.64.042313}
}

@article{biseparable,
  title = {Separability in $2\times{N}$ composite quantum systems},
  author = {Kraus, B. and Cirac, J. I. and Karnas, S. and Lewenstein, M.},
  journal = {Phys. Rev. A},
  volume = {61},
  issue = {6},
  pages = {062302},
  numpages = {10},
  year = {2000},
  month = {May},
  publisher = {American Physical Society},
  doi = {10.1103/PhysRevA.61.062302},
  url = {https://link.aps.org/doi/10.1103/PhysRevA.61.062302}
}

@article{partial_entanglement,
doi = {10.1088/1751-8121/ab5593},
url = {https://doi.org/10.1088/1751-8121/ab5593},
year = {2019},
month = {nov},
publisher = {IOP Publishing},
volume = {53},
number = {1},
pages = {015301},
author = {Han, Kyung Hoon and Kye, Seung-Hyeok},
title = {On the convex cones arising from classifications of partial entanglement in the three qubit system},
journal = {Journal of Physics A: Mathematical and Theoretical}
}

@article{444,
title = {Searching for extremal {PPT} entangled states},
journal = {Optics Communications},
volume = {283},
number = {5},
pages = {805-813},
year = {2010},
note = {Quo vadis Quantum Optics?},
issn = {0030-4018},
doi = {https://doi.org/10.1016/j.optcom.2009.10.050},
url = {https://www.sciencedirect.com/science/article/pii/S0030401809010426},
author = {Remigiusz Augusiak and Janusz Grabowski and Marek Kuś and Maciej Lewenstein}
}

@article{Chen_2013,
doi = {10.1088/1751-8113/46/27/275304},
url = {https://doi.org/10.1088/1751-8113/46/27/275304},
year = {2013},
month = {jun},
publisher = {IOP Publishing},
volume = {46},
number = {27},
pages = {275304},
author = {Chen, Lin and Đoković, Dragomir Ž},
title = {Separability problem for multipartite states of rank at most 4},
journal = {Journal of Physics A: Mathematical and Theoretical}
}

@article{prod,
  author  = {Kiem, Young-Hoon and Kye, Seung-Hyeok and Na, Joohan},
  title   = {Product Vectors in the Ranges of Multi-Partite States with Positive Partial Transposes and Permanents of Matrices},
  journal = {Communications in Mathematical Physics},
  year    = {2015},
  volume  = {338},
  number  = {2},
  pages   = {621--639},
  month   = {Sep},
  doi     = {10.1007/s00220-015-2385-x},
  url     = {https://doi.org/10.1007/s00220-015-2385-x}
}

@article{experiment1,
title = {Experimental detection of qubit-ququart pseudo-bound entanglement using three nuclear spins},
journal = {Physics Letters A},
volume = {383},
number = {14},
pages = {1549-1554},
year = {2019},
issn = {0375-9601},
doi = {https://doi.org/10.1016/j.physleta.2019.02.027},
url = {https://www.sciencedirect.com/science/article/pii/S0375960119301744},
author = {Amandeep Singh and Akanksha Gautam and  Arvind and Kavita Dorai}
}

@article{experiment2,
  title = {Implementing a nonpairwise three-body interaction of superconducting qubits and its applications in realizing three-qubit entanglement and quantum gates},
  author = {Liu, Tong},
  journal = {Phys. Rev. Res.},
  volume = {8},
  issue = {1},
  pages = {013023},
  numpages = {12},
  year = {2026},
  month = {Jan},
  publisher = {American Physical Society},
  doi = {10.1103/sysd-pg74},
  url = {https://link.aps.org/doi/10.1103/sysd-pg74}
}

@article{Bennett,
  title = {Unextendible Product Bases and Bound Entanglement},
  author = {Bennett, Charles H. and DiVincenzo, David P. and Mor, Tal and Shor, Peter W. and Smolin, John A. and Terhal, Barbara M.},
  journal = {Phys. Rev. Lett.},
  volume = {82},
  issue = {26},
  pages = {5385--5388},
  numpages = {0},
  year = {1999},
  month = {Jun},
  publisher = {American Physical Society},
  doi = {10.1103/PhysRevLett.82.5385},
  url = {https://link.aps.org/doi/10.1103/PhysRevLett.82.5385}
}

@article{3x3upb,
  author    = {DiVincenzo, David P. and Mor, Tal and Shor, Peter W. and Smolin, John A. and Terhal, Barbara M.},
  title     = {Unextendible Product Bases, Uncompletable Product Bases and Bound Entanglement},
  journal   = {Communications in Mathematical Physics},
  volume    = {238},
  number    = {3},
  pages     = {379--410},
  year      = {2003},
  month     = jul,
  doi       = {10.1007/s00220-003-0877-6},
  url       = {https://doi.org/10.1007/s00220-003-0877-6},
  issn      = {1432-0916}
}

@article{3x3classify,
  title = {Low-rank extremal positive-partial-transpose states and unextendible product bases},
  author = {Leinaas, Jon Magne and Myrheim, Jan and Sollid, Per \O{}yvind},
  journal = {Phys. Rev. A},
  volume = {81},
  issue = {6},
  pages = {062330},
  numpages = {6},
  year = {2010},
  month = {Jun},
  publisher = {American Physical Society},
  doi = {10.1103/PhysRevA.81.062330},
  url = {https://link.aps.org/doi/10.1103/PhysRevA.81.062330}
}

@article{Lin3x3,
    author = {Chen, Lin and Đoković, Dragomir Ž},
    title = {Description of rank four entangled states of two qutrits having positive partial transpose},
    journal = {Journal of Mathematical Physics},
    volume = {52},
    number = {12},
    pages = {122203},
    year = {2011},
    month = {12},
    issn = {0022-2488},
    doi = {10.1063/1.3663837},
    url = {https://doi.org/10.1063/1.3663837},
    eprint = {https://pubs.aip.org/aip/jmp/article-pdf/doi/10.1063/1.3663837/15777880/122203_1_online.pdf},
}

@article{concurrence,
  title = {Concurrence of multiqubit bound entangled states constructed from unextendible product bases},
  author = {Nawareg, Mohamed},
  journal = {Phys. Rev. A},
  volume = {101},
  issue = {3},
  pages = {032342},
  numpages = {10},
  year = {2020},
  month = {Mar},
  publisher = {American Physical Society},
  doi = {10.1103/PhysRevA.101.032342},
  url = {https://link.aps.org/doi/10.1103/PhysRevA.101.032342}
}

@article{max,
title = {Unextendible product bases, bound entangled states, and the range criterion},
journal = {Physics Letters A},
volume = {386},
pages = {126992},
year = {2021},
issn = {0375-9601},
doi = {https://doi.org/10.1016/j.physleta.2020.126992},
url = {https://www.sciencedirect.com/science/article/pii/S0375960120308598},
author = {Pratapaditya Bej and Saronath Halder}
}

@article{Johnston,
doi = {10.1088/1751-8113/47/42/424034},
url = {https://doi.org/10.1088/1751-8113/47/42/424034},
year = {2014},
month = {oct},
publisher = {IOP Publishing},
volume = {47},
number = {42},
pages = {424034},
author = {Johnston, Nathaniel},
title = {The structure of qubit unextendible product bases},
journal = {Journal of Physics A: Mathematical and Theoretical}
}

@article{general_position,
doi = {10.1088/1751-8113/48/4/045303},
url = {https://doi.org/10.1088/1751-8113/48/4/045303},
year = {2014},
month = {dec},
publisher = {IOP Publishing},
volume = {48},
number = {4},
pages = {045303},
author = {Ha, Kil-Chan and Kye, Seung-Hyeok},
title = {Multi-partite separable states with unique decompositions and construction of three qubit entanglement with positive partial transpose},
journal = {Journal of Physics A: Mathematical and Theoretical}
}

@article{WandGHZ,
  title = {Three qubits can be entangled in two inequivalent ways},
  author = {D\"ur, W. and Vidal, G. and Cirac, J. I.},
  journal = {Phys. Rev. A},
  volume = {62},
  issue = {6},
  pages = {062314},
  numpages = {12},
  year = {2000},
  month = {Nov},
  publisher = {American Physical Society},
  doi = {10.1103/PhysRevA.62.062314},
  url = {https://link.aps.org/doi/10.1103/PhysRevA.62.062314}
}

@article{SLOCC,
doi = {10.1088/1751-8121/aba576},
url = {https://doi.org/10.1088/1751-8121/aba576},
year = {2020},
month = {aug},
publisher = {IOP Publishing},
volume = {53},
number = {38},
pages = {385302},
author = {Li, Dafa and Guo, Yu},
title = {Local unitary equivalence of the {SLOCC} class of three qubits},
journal = {Journal of Physics A: Mathematical and Theoretical}
}

@article{GHZ,
  title = {Locality of three-qubit {Greenberger-Horne-Zeilinger-symmetric} states},
  author = {Zhu, Dian and He, Gang-Gang and Zhang, Fu-Lin},
  journal = {Phys. Rev. A},
  volume = {105},
  issue = {6},
  pages = {062202},
  numpages = {7},
  year = {2022},
  month = {Jun},
  publisher = {American Physical Society},
  doi = {10.1103/PhysRevA.105.062202},
  url = {https://link.aps.org/doi/10.1103/PhysRevA.105.062202}
}

@article{Lorentz_invariant,
  title = {Extremal states of positive partial transpose in a system of three qubits},
  author = {Steensgaard Garberg, \O{}yvind and Irgens, B\o{}rge and Myrheim, Jan},
  journal = {Phys. Rev. A},
  volume = {87},
  issue = {3},
  pages = {032302},
  numpages = {15},
  year = {2013},
  month = {Mar},
  publisher = {American Physical Society},
  doi = {10.1103/PhysRevA.87.032302},
  url = {https://link.aps.org/doi/10.1103/PhysRevA.87.032302}
}

@article{Bell_states,
  title = {Optimal creation of entanglement using a two-qubit gate},
  author = {Kraus, B. and Cirac, J. I.},
  journal = {Phys. Rev. A},
  volume = {63},
  issue = {6},
  pages = {062309},
  numpages = {8},
  year = {2001},
  month = {May},
  publisher = {American Physical Society},
  doi = {10.1103/PhysRevA.63.062309},
  url = {https://link.aps.org/doi/10.1103/PhysRevA.63.062309}
}

@article{pure,
  author = {Usha Devi, A. R. and Sudha and Shenoy, H. Akshata and Karthik, H. S. and Karthik, B. N.},
  title = {Lorentz invariants of pure three-qubit states},
  journal = {Quantum Information Processing},
  year = {2024},
  volume = {23},
  number = {7},
  pages = {264},
  doi = {10.1007/s11128-024-04454-2},
  url = {https://doi.org/10.1007/s11128-024-04454-2},
  issn = {1573-1332}
}
\end{document}